\newtheorem{defi}{Definition}[section]
\newtheorem{thm}{Theorem}[section]
\newtheorem{lem}[thm]{Lemma}
\begin{document}
%
% paper title
\title{Hamiltonian Connectivity of Twisted Hypercube-Like Networks under the Large Fault Model}
%
%
% author names and IEEE memberships
% note positions of commas and nonbreaking spaces ( ~ ) LaTeX will not break
% a structure at a ~ so this keeps an author's name from being broken across
% two lines.
% use \thanks{} to gain access to the first footnote area
% a separate \thanks must be used for each paragraph as LaTeX2e's \thanks
% was not built to handle multiple paragraphs

\author{Qiang~Dong,~Hui~Gao,~Yan~Fu~and~Xiaofan~Yang~% <-this % stops a space
\thanks{}% <-this % stops a space
\thanks{Q. Dong, H. Gao and Y. Fu are with the School of Computer Science and Engineering,
University of Electronic Science and Technology of China, Chengdu
611731, China. The first two authors contributed equally to the
work.}
\thanks{X. Yang is with the College of Computer Science, Chongqing University,
Chongqing 400044, China.}
\thanks{Corresponding Author: Q. Dong (E-mail: dongq@uestc.edu.cn).}}

% note the % following the last \IEEEmembership and also the first \thanks -
% these prevent an unwanted space from occurring between the last author name
% and the end of the author line. i.e., if you had this:
%
% \author{....lastname \thanks{...} \thanks{...} }
%                     ^------------^------------^----Do not want these spaces!
%
% a space would be appended to the last name and could cause every name on that
% line to be shifted left slightly. This is one of those "LaTeX things". For
% instance, "A\textbf{} \textbf{}B" will typeset as "A B" not "AB". If you want
% "AB" then you have to do: "A\textbf{}\textbf{}B"
% \thanks is no different in this regard, so shield the last } of each \thanks
% that ends a line with a % and do not let a space in before the next \thanks.
% Spaces after \IEEEmembership other than the last one are OK (and needed) as
% you are supposed to have spaces between the names. For what it is worth,
% this is a minor point as most people would not even notice if the said evil
% space somehow managed to creep in.
%
% The paper headers
\markboth{IEEE TPDS,~Vol.~xx, No.~xx,~October~2011}{Dong
\MakeLowercase{\textit{et al.}}: Hamiltonian Connectivity of THLNs
under the Large Fault Model}
% The only time the second header will appear is for the odd numbered pages
% after the title page when using the twoside option.

% If you want to put a publisher's ID mark on the page
% (can leave text blank if you just want to see how the
% text height on the first page will be reduced by IEEE)
%\pubid{0000--0000/00\$00.00~\copyright~2002 IEEE}

% use only for invited papers
%\specialpapernotice{(Invited Paper)}

% make the title area
\maketitle

\begin{abstract}
Twisted hypercube-like networks (THLNs) are an important class of
interconnection networks for parallel computing systems, which
include most popular variants of the hypercubes, such as crossed
cubes, M\"obius cubes, twisted cubes and locally twisted cubes. This
paper deals with the fault-tolerant hamiltonian connectivity of
THLNs under the large fault model. Let $G$ be an $n$-dimensional
THLN and $F \subseteq V(G)\bigcup E(G)$, where $n \geq 7$ and $|F|
\leq 2n - 10$. We prove that for any two nodes $u,v \in V(G - F)$
satisfying a simple necessary condition on neighbors of $u$ and $v$,
there exists a hamiltonian or near-hamiltonian path between $u$ and
$v$ in $G-F$. The result extends further the fault-tolerant graph
embedding capability of THLNs.
\end{abstract}

\begin{keywords}
interconnection networks, fault tolerance, hypercube-like network,
hamiltonian path, near-hamiltonian path, large fault model.
\end{keywords}

\section{Introduction}
% The very first letter is a 2 line initial drop letter followed
% by the rest of the first word in caps.
%
% form to use if the first word consists of a single letter:
% \PARstart{A}{demo} file is ....
%
% form to use if you need the single drop letter followed by
% normal text (unknown if ever used by IEEE):
% \PARstart{A}{}demo file is ....
%
% Some journals put the first two words in caps:
% \PARstart{T}{his demo} file is ....
%
% Here we have the typical use of a "T" for an initial drop letter
% and "HIS" in caps to complete the first word.

% You must have at least 2 lines in the paragraph with the drop letter
% (should never be an issue)

%For many examples and directions concerning our style, refer to the
%IEEEtran_HOWTO.pdf. The document illustrates many examples of the IEEE style
%and how authors can code their .tex document to comply with this style.

\PARstart{T}{he} performance of a parallel computing system heavily
depends on the effectiveness of the underlying interconnection
network. An interconnection network is usually represented by a
graph, where nodes and edges correspond to processors and
communication links between processors, respectively. In the design
and analysis of an interconnection network, one major concern is its
graph embedding capability, which reflects how efficiently a
parallel algorithm with structured task graph (guest graph) can be
executed on this network (host graph). Cycles and paths are
recognized as important guest graphs because a great number of
parallel algorithms, such as matrix-vector multiplication, Gaussian
elimination and bitonic sorting, have been developed on
cycle/path-structured task graphs \cite{P99}.

As the size of a parallel computing system increases, it becomes
much likely that some processors and communication links fail to
work in such a system. Consequently, it is essential to study the
fault-tolerant graph embedding capability of an interconnection
network with faulty elements.

The \textit{hypercube-like networks} (HLNs) are an important class
of generalizations of the popular hypercube interconnection networks
for parallel computing. Among HLNs one may identify a subclass of
networks, called the \textit{twisted hypercube-like networks}
(THLNs), which include most well-known variants of the hypercubes,
such as crossed cubes \cite{EFE92}, M\"obius cubes \cite{CULL95},
twisted cubes \cite{HK87} and locally twisted cubes \cite{YXF05}.
The fault-free and fault-tolerant cycle/path embedding capabilities
of these hypercube variants have been intensively studied in the
literature
\cite{DONG08,DONG08IPL,FAN10,HF10,H04,H10,LAI11,WF11,XU11,XU06,YMC10,YXF10}.

In recent years, the fault-tolerant cycle/path embedding
capabilities of HLNs and THLNs have received considerable research
attention \cite{FAN11TCS,FAN11,H10SIAM,PARK05,PARK09,PARK07,YANG11}.
However, most of the embeddings tolerate no more faulty elements
than the degree of the graph, i.e., under the \emph{small fault
model}. Recently, Yang \emph{et al.} \cite{YANG11} studied the cycle
embedding capability of THLNs with more faulty elements than the
degree of the graph, i.e., under the \emph{large fault model}. They
proved that for an $n$-dimensional ($n$-D) THLN $G$ and $F \subseteq
V(G) \bigcup E(G)$, where $n \ge 7$ and $|F| \le 2n - 9$, $G - F$
contains a hamiltonian cycle if $\delta (G - F) \ge 2$, and $G - F$
contains a near-hamiltonian cycle if $\delta (G - F) \le 1$.

A question arises naturally: what about the fault-tolerant
hamiltonian paths in THLNs under the large fault model? This paper
attempts to partially answer this question. Let $G$ be an $n$-D THLN
and $F \subseteq V(G)\bigcup E(G)$, where $n \geq 7$ and $|F| \leq
2n - 10$. We prove that for any two nodes $u,v \in V(G - F)$
satisfying a simple necessary condition on neighbors of $u$ and $v$,
there exists a hamiltonian or near-hamiltonian path between $u$ and
$v$ in $G-F$. As a nontrivial extension of \cite{YANG11}, our result
extends further the fault-tolerant graph embedding capability of
THLNs.

The rest of this paper is organized as follows. Section 2 gives
definitions and notions. Section 3 establishes the main result.
Section 4 concludes the paper.

\section{Definitions and Notations}
\label{}

For basic graph-theoretic notations and terminology, the reader is
referred to ref. \cite{DIESTEL}. For a graph $G$, let $V(G)$ and
$E(G)$ denote its node set and edge set, respectively. For two nodes
$u$ and $v$ in a graph $G$, $u$ is a neighbor of $v$ if and only if
$(u, v)\in E(G)$. For a node $u$ in a graph $G$, let $N_G (u) =
\big\{ {v \in V(G):( u,v)  \in E(G)} \big\}$, and the degree of $u$
in $G$ is defined as $\deg_G(u) = \big| {N_G (u)} \big|$. If the
degree of every node in a graph $G$ is $k$, then $G$ is called a
$k$-regular graph. For a graph $G$, let $\delta(G)=\mathop {\min}
\limits_{u \in V(G)}\big\{ \deg_G(u)\big\}$. For a graph $G$ and a
set $F \subseteq V(G) \bigcup E(G)$, let $G - F$ denote the graph
defined by $V(G - F) = V(G) - F$, $E(G - F) = \big\{ (u,v)\in E(G):
u,v \in V(G) - F {\rm{\ and\ }} (u, v) \notin F \big\}$.

A hamiltonian cycle (hamiltonian path, respectively) in a graph is a
cycle (path, respectively) that passes every node of the graph
exactly once. A near-hamiltonian cycle (near-hamiltonian path,
respectively) in a graph is a cycle (path, respectively) that passes
every node but one of the graph exactly once.

For two nodes $u$ and $v$ in a graph $G$, let $dist_{G}(u, v)$
denote the distance between $u$ and $v$, i.e., the minimum length of
all paths between $u$ and $v$. For a node $x$ on a path $P$ between
$u$ and $v$, if $dist_{P}(x,u)\leq dist_{P}(x,v)$, then we regard
$x$ as a $u$-closer node on $P$, and vice-versa.

According to \cite{YANG11}, we give the definition of twisted
hypercube-like network as follows.

\begin{figure}
  \begin{minipage}[t]{0.5\linewidth}
    \includegraphics[scale=0.75]{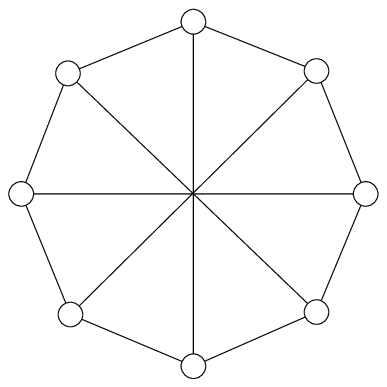}
    \caption{3-D THLN}
    \label{fig:1}
  \end{minipage}%
  \begin{minipage}[t]{0.5\linewidth}
    \includegraphics[scale=0.75]{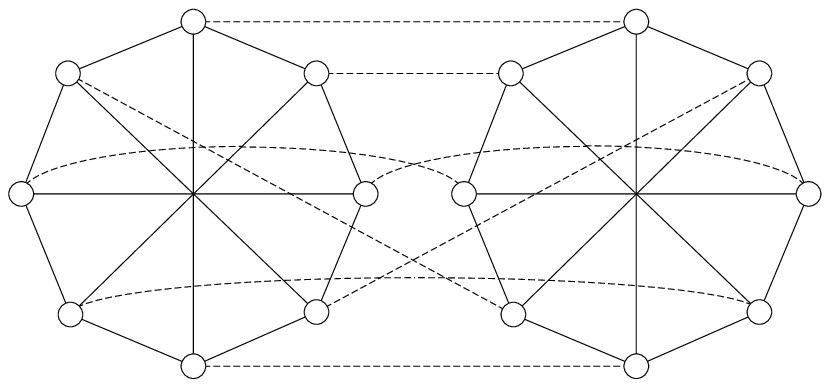}
    \caption{4-D crossed cube}
    \label{fig:2}
  \end{minipage}
\end{figure}

\begin{figure}
  \begin{minipage}[t]{0.5\linewidth}
    \includegraphics[scale=0.75]{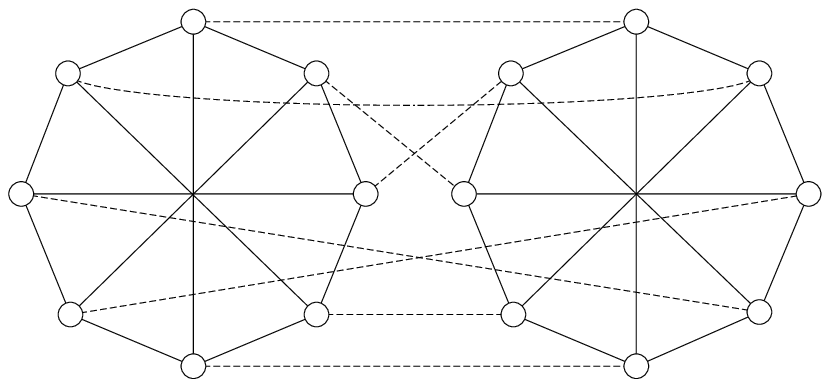}
    \caption{4-D 0-M\"obius cube}
    \label{fig:3}
  \end{minipage}%
  \begin{minipage}[t]{0.5\linewidth}
    \includegraphics[scale=0.75]{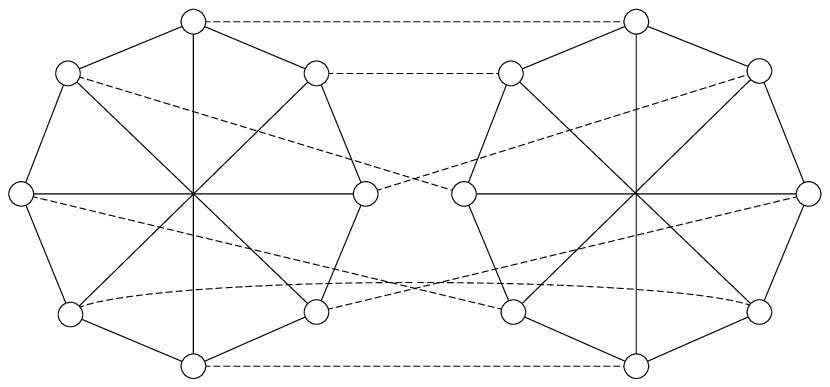}
    \caption{4-D locally twisted cube}
    \label{fig:4}
  \end{minipage}
\end{figure}

\begin{defi}
\label{defi:THLN}  For $n \ge 3$, an $n$-dimensional ($n$-D, for
short) \textit{twisted hypercube-like network} (THLN, for short) is
a graph $G$ defined recursively as follows.

\begin{itemize}[\setlabelwidth{(1)}]
\item[(1)] For $n = 3$, $G$ is isomorphic to the graph in
Fig. \ref{fig:1}.
\item[(2)] For $n \ge 4$, $G$ is constructed from two $(n - 1)$-D THLN copies,
$G_{1}$ and $G_{2}$, in this way:

\centerline{$V(G) = V(G_1 ) \bigcup V(G_2 )$,}

\centerline{$E(G) = E(G_1 ) \bigcup E(G_2 ) \bigcup \big \{ \big
(u,\phi (u)\big ) :u \in V(G_1 ) \big\}$,}

where $\phi :V(G_1 ) \to V(G_2 )$ is a bijective mapping. In what
follows, we denote such a THLN as $G =  \oplus _\phi (G_1, G_2 )$,
and we use $E_c$ to denote the edge set $\big \{ \big (u,\phi
(u)\big ) :u \in V(G_1 ) \big\}$.

\end{itemize}
\end{defi}

Fig. \ref{fig:2}--\ref{fig:4} illustrate three well-known subclasses
of 4-D THLNs. It is easily veritied that, an $n$-D THLN $G$ is an
$n$-regular graph, $\big|V(G)\big|=2^n$, $\big|E(G)\big|=2^{n-1}\ast
n$, and $\big|E_c\big|=2^{n-1}$. The following important results on
THLNs reported in \cite{PARK05,YANG11,PARK09} will be used in this
paper.

\begin{lem}
\label{lem:hamilton} Let $G$ be an $n$-D THLN and $F \subseteq V(G)
\bigcup E(G)$, where $n \geq 3$. $G - F$ contains a hamiltonian
cycle if $\vert F\vert \le n - 2$, and $G - F$ contains a
hamiltonian path between any two fault-free nodes if $\vert F\vert
\le n - 3$.
\end{lem}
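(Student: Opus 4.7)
The plan is to prove Lemma~\ref{lem:hamilton} by induction on $n$, handling the hamiltonian-cycle statement ($|F| \le n-2$) and the hamiltonian-path statement ($|F| \le n-3$) together, since the inductive step for the cycle result in dimension $n$ needs to invoke the path result in dimension $n-1$. For the base case I would verify the $n=3$ claim directly on the fixed graph of Fig.~\ref{fig:1} — a routine finite check — and I would probably hand-verify $n=4,5$ as well so that the recursive step can assume the inductive bounds are comfortably satisfied.

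For the inductive step, write $G = \oplus_\phi(G_1,G_2)$ and decompose $F = F_1 \cup F_2 \cup F_c$, where $F_i = F \cap (V(G_i) \cup E(G_i))$ and $F_c = F \cap E_c$. Assuming without loss of generality that $|F_1| \le |F_2|$, we obtain $|F_1| \le \lfloor |F|/2 \rfloor$, which is small enough to apply the inductive path statement inside $G_1$. To build a hamiltonian path from $u$ to $v$ in $G-F$, I would branch on where $u,v$ lie. If both lie in $V(G_1)$, I would inductively produce a hamiltonian path $P_1$ from $u$ to $v$ in $G_1-F_1$, locate an edge $(x,y)$ on $P_1$ with $\phi(x),\phi(y)$ fault-free and crossing edges $(x,\phi(x)),(y,\phi(y)) \notin F$, and splice in a hamiltonian path of $G_2-F_2$ from $\phi(x)$ to $\phi(y)$ supplied by induction; if $u \in V(G_1)$ and $v \in V(G_2)$, I would pick any fault-free crossing edge $(w,\phi(w))$ with both endpoints fault-free and concatenate a hamiltonian path from $u$ to $w$ in $G_1-F_1$ with one from $\phi(w)$ to $v$ in $G_2-F_2$. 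The hamiltonian-cycle statement is handled analogously by finding two fault-free crossing edges that splice two subcube hamiltonian paths into a single cycle.

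The main obstacle is that even after the halving, $|F_2|$ may approach $n-2$, which can exceed the inductive threshold $n-4$ needed to invoke the path statement inside $G_2$. In that regime I would exploit the complementary slack: when $|F_2|$ is large, $|F_1| + |F_c|$ must be tiny, so almost every crossing edge is available and $G_1 - F_1$ is nearly fault-free. A standard workaround is to temporarily reassign a boundary element from $F_2$ to $F_c$ (or delete an incident crossing edge) so as to pull $|F_2|$ into the inductive range, compensating via the abundance of fault-free crossing edges still available. The counting arguments that guarantee a suitable pair $(x,\phi(x)),(y,\phi(y))$ — with both endpoints fault-free, both crossing edges fault-free, and $(x,y)$ appearing on the chosen hamiltonian path in $G_1-F_1$ — are where most of the bookkeeping lives, and making the inequality $|F| \le n-3$ sharp in every sub-case is the delicate part.
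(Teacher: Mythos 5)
You should first note that the paper does not prove this lemma at all: it is quoted as a known result from the cited references (Park--Kim--Lim's fault-hamiltonicity work on hypercube-like networks and the Yang et al.\ paper), so the only fair comparison is against that published induction, which your sketch does resemble in outline (simultaneous induction on both statements, splicing subcube paths across the matching $E_c$). Measured against that, your sketch has a genuine gap exactly where you locate ``most of the bookkeeping'': the heavy-fault side. Your proposed fix --- ``temporarily reassign a boundary element from $F_2$ to $F_c$ (or delete an incident crossing edge)'' --- is not a valid move. If the offending element of $F_2$ is a faulty \emph{node}, relabelling it as a matching fault does not change the fact that every path you build inside $G_2$ must still avoid it, so $|F_2|$ has not really been reduced; and deleting a crossing edge incident to it reduces $|F_c|$'s budget, not $|F_2|$. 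The technique that actually works (used in the cited proofs, and echoed in Cases 4--5 of this paper's main theorem) is different: declare a chosen fault $f_e$ virtually fault-free, obtain a hamiltonian cycle or path \emph{through} $f_e$ in $G_2 - F_2 + \{f_e\}$, then surgically reroute around $f_e$ using the nearly fault-free $G_1$ and the abundant crossing edges. Note also that for the cycle statement the overshoot can be $2$ (since $|F_2|$ may reach $n-2$ while the subcube path threshold is $n-4$), so a trick absorbing a single excess fault does not suffice; and your case analysis omits the configuration $u, v \in V(G_2)$ with $G_2$ the heavy side, where neither of your two constructions applies and one needs disjoint-path-cover-type arguments (in the spirit of Lemma~\ref{lem:disjoint}) rather than a single spliced path.

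A second, independent problem is your base case. A THLN of dimension $4$ or $5$ is not a single graph: Definition~\ref{defi:THLN} allows an \emph{arbitrary} bijection $\phi$ at every level, so the $4$-D and $5$-D THLNs form large families of pairwise non-isomorphic graphs, and ``a routine finite check'' by hand is not available. The induction therefore has to be made to close starting from $n=3$ alone, and that is precisely where your counting arguments degenerate: at $n=4$ the path statement already allows one fault while the $3$-D subcubes tolerate none for paths, so the heavy-side machinery you deferred is needed immediately, with essentially no slack in any of the inequalities. In short, the skeleton is the right one, but the two places you flag as ``bookkeeping'' --- the over-threshold side and the tight small-$n$ cases --- are the actual mathematical content of the cited proofs, and the specific devices you propose for them would fail as stated.
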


\begin{lem}
\label{lem:YANG} Let $G$ be an $n$-D THLN and $F \subseteq V(G)
\bigcup E(G)$, where $n \geq 7$ and $|F| \leq 2n - 9$. $G - F$
contains a hamiltonian cycle if $\delta (G - F) \geq 2$, and $G - F$
contains a near-hamiltonian cycle if $\delta (G - F) \leq 1$.
\end{lem}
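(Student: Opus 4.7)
The natural route is induction on $n \ge 7$, using the recursive decomposition $G = \oplus_\phi(G_1, G_2)$ furnished by Definition \ref{defi:THLN}. The base case $n = 7$ has $|F| \le 5 = n - 2$, so whenever $\delta(G-F) \ge 2$ a hamiltonian cycle is delivered directly by Lemma \ref{lem:hamilton}; the case $\delta(G-F) \le 1$ is reduced to the previous case by removing one low-degree node (and its incident faults) and reapplying Lemma \ref{lem:hamilton} on the $2^{7}-1$ remaining nodes with at most $n-2$ faults, producing a near-hamiltonian cycle of the original graph.

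For the inductive step, I would write $F = F_1 \cup F_2 \cup F_c$ with $F_i \subseteq V(G_i) \cup E(G_i)$ and $F_c \subseteq E_c$, and assume without loss of generality $|F_1| \le |F_2|$. Then $|F_1| \le n - 5$, comfortably below the $(n-1)-3$ budget of Lemma \ref{lem:hamilton}, so $G_1 - F_1$ admits a hamiltonian path between any prescribed pair of fault-free nodes. The plan then splits on how many faults sit in $G_2$. In the typical subcase $|F_2| \le 2(n-1) - 9 = 2n - 11$, the induction hypothesis applied to $G_2$ gives a (near-)hamiltonian cycle $C_2$ of $G_2 - F_2$; I would then search for a healthy crossing edge $(x_1, x_2) \in E_c \setminus F_c$ whose partner edge $(x_1', x_2')$ at a $C_2$-neighbor $x_2'$ of $x_2$ is also healthy, cut $C_2$ between $x_2$ and $x_2'$, route a hamiltonian path in $G_1 - F_1$ between $x_1$ and $x_1'$, and glue along the two crossing edges. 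A counting argument based on $|E_c| = 2^{n-1}$ and $|F| \le 2n-9$ ensures such a pair exists.

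The awkward subcase is $|F_2| \in \{2n-10,\, 2n-9\}$, which forces $|F_1| + |F_c| \le 1$, so induction cannot be applied to $G_2$. Here I would rebalance: pick a node $y \in V(G_2)$ of minimum degree in $G_2 - F_2$, ``move'' $y$ across by treating the crossing edge $(\phi^{-1}(y), y)$ as the splice point, and effectively trade one vertex of $G_2$ for an extra fault budget in $G_1$, bringing $G_2$'s remaining fault count down to the inductive range $2n-11$ while $G_1$'s count stays within $n-3$.

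The main obstacle, I expect, is the bookkeeping around nodes of degree $\le 1$. One has to show that under $|F| \le 2n-9$ the set $S = \{u \in V(G-F) : \deg_{G-F}(u) \le 1\}$ is small (in fact one can argue $|S| \le 1$, since two such nodes would absorb at least $2(n-1) = 2n-2 > 2n-9$ fault incidences when $n \ge 7$), so the ``skipped'' node in the near-hamiltonian case is uniquely determined, and the merge across $E_c$ can be arranged to avoid it. Making this rigorous, together with verifying that in the rebalanced subcase the chosen crossing edge remains healthy and lies on both the hamiltonian path in $G_1-F_1$ and the (near-)hamiltonian cycle of $G_2-F_2$, is where most of the careful case analysis will lie.
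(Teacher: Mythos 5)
First, a point of comparison: this paper does not prove Lemma~\ref{lem:YANG} at all --- it is quoted verbatim from \cite{YANG11} as a known result (the text introduces it among ``important results on THLNs reported in \cite{PARK05,YANG11,PARK09}''). So there is no in-paper proof to match your outline against, and your proposal must stand on its own merits. As an outline it is in the right spirit (induction on the recursive decomposition $G=\oplus_\phi(G_1,G_2)$, splice across $E_c$), and several pieces are sound: $|F_1|\le n-5$ follows correctly from the balancing assumption, and the counting argument for a healthy pair of adjacent cross edges on $C_2$ works since $2(2n-9)\ll 2^{n-1}-1$. But there are genuine gaps.

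The most serious is the ``awkward subcase'' $|F_2|\in\{2n-10,\,2n-9\}$. Your rebalancing device --- ``moving'' a node $y$ of $G_2$ across the matching --- is not a coherent graph operation: deleting or relocating $y$ does not reduce $|F_2|$ (if anything, treating $y$ as faulty increases it), and $G_2-y$ is not a THLN, so neither the induction hypothesis nor Lemma~\ref{lem:hamilton} applies to it. The workable device for an over-budget half (used in this paper's own Theorem, Cases 4 and 5) is to \emph{imagine a faulty element fault-free}, apply the full-strength cycle result to $G_2-F_2+\{f_e\}$, and then cut the resulting cycle at $f_e$; but even that absorbs an excess of only one fault, whereas $|F_2|=2n-9$ exceeds the inductive budget $2(n-1)-9=2n-11$ by two, so this subcase needs a dedicated argument you have not supplied. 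A second substantive gap is the hamiltonian-versus-near-hamiltonian bookkeeping: if $\delta(G_2-F_2)\le 1$ at a node $y$ whose cross edge is healthy, then $\delta(G-F)\ge 2$ is entirely possible and the lemma demands a \emph{full} hamiltonian cycle; the induction hypothesis only hands you a near-hamiltonian cycle of $G_2-F_2$ skipping $y$, and your plan to ``arrange the merge to avoid'' the skipped node is exactly backwards --- you must instead re-insert $y$ by routing through its unique surviving $G_2$-neighbor and its cross edge, a different splice that your outline does not construct. Two smaller defects: your base-case reduction for $\delta(G-F)\le 1$ applies Lemma~\ref{lem:hamilton} to a vertex-deleted graph with $6>7-2$ faults, which is invalid (fortunately the case is vacuous: with $|F|\le 5$ every node of the $7$-regular $G$ keeps degree $\ge 2$ in $G-F$, and you should simply say so); and your claim $|S|\le 1$ compares fault \emph{incidences} to faults --- a single faulty node that is a common neighbor of two candidates kills two incidences while costing one element of $F$, so the inequality $2(n-1)>2n-9$ proves nothing until you bound the number of common neighbors in a THLN, which, since $\phi$ is an arbitrary bijection, itself requires an inductive argument rather than an appeal to hypercube folklore.
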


\begin{lem}
\label{lem:disjoint} Let $G$ be an $n$-D THLN and $F \subseteq V(G)
\bigcup E(G)$ such that $|F|\leq n - 4$. For any two pairs of nodes
$[x_1,x_2]$ and $[y_1, y_2]$ in $G - F$, there exist two paths $P_1$
and $P_2$ in $G - F$ such that $P_1$ connects $x_1$ and $y_1$, $P_2$
connects $x_2$ and $y_2$, $V(P_1) \bigcap V(P_2) = \emptyset$ and
$V(P_1) \bigcup V(P_2) = V (G - F) $.
\end{lem}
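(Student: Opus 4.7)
The natural strategy is induction on the dimension $n$, exploiting the recursive decomposition $G = \oplus_\phi(G_1, G_2)$ supplied by Definition~\ref{defi:THLN}, with Lemma~\ref{lem:hamilton} taking care of the single-path subproblems that arise inside one half. The base case is $n = 4$ (which forces $|F| = 0$); here the claim can be verified by a direct structural analysis of the 16-node, 4-regular graph, aided by Lemma~\ref{lem:hamilton} applied to small substructures.

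For the inductive step, partition the faults as $F_i = F \cap (V(G_i) \cup E(G_i))$ and $F_c = F \cap E_c$, so that $|F_1| + |F_2| + |F_c| = |F| \leq n - 4$. Since $(n-1) - 4 = n - 5$, induction is immediately available on the half with fewer internal faults, and Lemma~\ref{lem:hamilton} supplies a hamiltonian path inside a half whenever its fault count is at most $n - 4$. I would then split into cases according to how the four endpoints $X = \{x_1, x_2, y_1, y_2\}$ distribute between $V(G_1)$ and $V(G_2)$: a 4-0 split, a 3-1 split, or a 2-2 split, the last of which further subdivides into three subcases depending on which endpoints are paired on each side.

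In every case the construction follows a common template. Choose one or two auxiliary crossing edges $(w, \phi(w)) \in E_c \setminus F_c$ with both endpoints fault-free and disjoint from $X$; apply the inductive hypothesis on one half with endpoint pairs obtained by combining some $x_i, y_j$ with the auxiliary endpoints, producing a 2-disjoint spanning cover of that half; and splice the pieces across $E_c$ with a hamiltonian path through the other half obtained from Lemma~\ref{lem:hamilton}. The 3-1 partition requires one auxiliary edge, while the 4-0 partition and the ``crossed'' 2-2 subcase (where $\{x_1, x_2\} \subseteq V(G_1)$ and $\{y_1, y_2\} \subseteq V(G_2)$, say) require a two-edge detour through the opposite half.

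The delicate point is what to do when the fault budget is nearly exhausted on one side, for instance when $|F_1| = n - 4$ so that the inductive hypothesis cannot be invoked on $G_1$ at all. In that regime $F_2 = F_c = \emptyset$, and the plan is to route both paths so that each crosses into $G_2$ via one auxiliary edge and returns via another; Lemma~\ref{lem:hamilton} can then be used freely inside the fault-free $G_2$, and the subproblem left inside $G_1$ is a 2-disjoint cover with the original $n - 4$ faults but with extra freedom in the choice of its four endpoints. Elsewhere, a simple counting bound ($|E_c| = 2^{n-1}$ against $O(n)$ forbidden choices) guarantees that suitable auxiliary edges exist, and the remaining work is careful bookkeeping to verify that the pieces assembled across $E_c$ concatenate without repetition and together span all of $V(G-F)$.
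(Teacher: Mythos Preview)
The paper does not prove this lemma at all: it is stated without proof and attributed to Park, Kim, and Lim~\cite{PARK09} (the sentence just before Lemma~\ref{lem:hamilton} introduces all three lemmas as ``important results on THLNs reported in \cite{PARK05,YANG11,PARK09}''). So there is no in-paper argument to compare your proposal against.

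As for the proposal itself, the inductive framework with case analysis on the distribution of the four terminals between $G_1$ and $G_2$ is indeed the standard route, and the generic cases you sketch are fine. The gap is in your ``delicate point'' where $|F_1| = n-4$. You correctly observe that the inductive hypothesis no longer applies to $G_1$ (since $n-4 > (n-1)-4$), but your proposed fix does not close the gap. If ``each path crosses into $G_2$ and returns,'' then inside $G_1$ each of $P_1$ and $P_2$ splits into two segments, so the residual problem in $G_1$ is a \emph{four}-path disjoint cover with $n-4$ faults --- strictly harder than what you started with, not easier. ``Extra freedom in the choice of endpoints'' does not reduce the fault count, so induction is still unavailable. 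The usual remedy in this regime is to temporarily reinstate one fault $f\in F_1$, apply the inductive hypothesis to $G_1 - (F_1\setminus\{f\})$ (now with $n-5$ faults), and then repair the resulting cover by detouring through the fault-free $G_2$ wherever $f$ is used; this repair step requires its own case analysis (is $f$ a node or an edge, which of the two paths contains it, where does it sit relative to the terminals), none of which you have supplied.
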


\section{Main result}
\label{}

This section deals with the fault-tolerant hamiltonian connectivity
of THLNs under the large fault model. We can easily verify the
following lemma.

\begin{lem}
Given a graph $G$ and $F \subseteq V(G)\bigcup E(G)$. For any two
nodes $x,y\in V(G-F)$ such that $N_{G-F}(s)-\{t\}= \emptyset$ or
$N_{G-F}(t)-\{s\}= \emptyset$, there exists no path of length two or
longer in $G-F$.
\end{lem}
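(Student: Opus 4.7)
The plan is to give a direct contradiction argument: assume a path of length at least two between $x$ and $y$ exists in $G-F$, then examine its second node. Concretely, I would first note that the statement is really about paths \emph{between $x$ and $y$} in $G-F$ (reading $s,t$ as $x,y$, since those are the variables introduced), and that any such path $P$ of length $\ell \ge 2$ must visit at least one internal node.

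Next, I would pick the neighbor $w$ of $x$ on $P$, i.e.\ the node adjacent to $x$ on $P$. Since $P$ has length at least $2$, the node $w$ cannot equal $y$ (because $y$ is the other endpoint and $P$ has length $\ge 2$, so after $w$ we still must reach $y$). Hence $w \in N_{G-F}(x) \setminus \{y\}$. If the hypothesis is $N_{G-F}(x)-\{y\}=\emptyset$, this is an immediate contradiction. The symmetric case $N_{G-F}(y)-\{x\}=\emptyset$ is handled by traversing $P$ from the other endpoint and picking the neighbor of $y$ on $P$.

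There is essentially no obstacle — the argument is a one-line observation about the endpoints of a path — so the only care needed is in the bookkeeping of notation (the apparent $x,y$ vs.\ $s,t$ mismatch) and in remembering to handle both disjunctive hypotheses in the ``or''. This lemma is then used later as the trivial necessary condition that must be assumed about $u,v$ in the main theorem: if either $u$ or $v$ loses all other neighbors to faults, no hamiltonian or near-hamiltonian $u$-$v$ path can exist, so one must exclude this degenerate configuration from the hypothesis of the main result.
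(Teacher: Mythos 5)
Your argument is correct: the paper offers no proof at all (it introduces the lemma with ``We can easily verify the following lemma''), and your endpoint observation---the neighbor of $s$ on any simple $s$--$t$ path of length at least two is a vertex of $N_{G-F}(s)-\{t\}$, with the symmetric argument for $t$---is exactly the intended trivial verification. You also correctly flagged the $x,y$ versus $s,t$ notational slip in the statement and the lemma's role as the degenerate case excluded by the hypothesis of the main theorem.
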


Excluding the above special cases, the main result of this paper is
formulated as follows.

\begin{thm}
Let $G$ be an $n$-D THLN and $F \subseteq V(G)\bigcup E(G)$, where
$n \geq 7$ and $|F| \leq 2n - 10$. For any two nodes $s,t \in V(G -
F)$ such that $N_{G-F}(s)-\{t\}\neq \emptyset$ and
$N_{G-F}(t)-\{s\}\neq \emptyset$, there exists a hamiltonian or
near-hamiltonian path between $s$ and $t$ in $G-F$.
\end{thm}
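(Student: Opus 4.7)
The plan is to prove the theorem by induction on $n$, exploiting the recursive decomposition $G=\oplus_\phi(G_1,G_2)$ from Definition~\ref{defi:THLN}. For the base case $n=7$ we have $|F|\le 2n-10=4=n-3$, so Lemma~\ref{lem:hamilton} directly yields a Hamiltonian $s$-$t$ path in $G-F$. For the inductive step ($n\ge 8$), I would partition the faults as $F_1=F\cap(V(G_1)\cup E(G_1))$, $F_2=F\cap(V(G_2)\cup E(G_2))$, and $F_c=F\cap E_c$. After swapping the two subcubes if necessary one may assume $|F_1|\le|F_2|$, giving $|F_1|\le n-5$. Since $n-5\le(n-1)-4$, Lemma~\ref{lem:disjoint} is available on $G_1-F_1$, and since $n-5\le(n-1)-3$, Lemma~\ref{lem:hamilton} supplies a Hamiltonian path between any two fault-free nodes of $G_1-F_1$.

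Next I would split into three cases according to where $s$ and $t$ lie. In the \emph{split case} ($s\in V(G_1)$ and $t\in V(G_2)$, up to symmetry), I pick a bridge $u\in V(G_1)\setminus F_1$ with $u':=\phi(u)\in V(G_2)\setminus F_2$ and $(u,u')\notin F_c$, then concatenate a Hamiltonian $s$-$u$ path in $G_1-F_1$ (Lemma~\ref{lem:hamilton}), the edge $(u,u')$, and a Hamiltonian $u'$-$t$ path in $G_2-F_2$ provided by the inductive hypothesis. In the \emph{same-$G_1$ case} ($s,t\in V(G_1)$), I choose two bridges $u,v\in V(G_1)\setminus F_1$ with intact images $u',v'$ in $G_2$, use Lemma~\ref{lem:disjoint} on $G_1-F_1$ to obtain vertex-disjoint paths $Q_1\colon s\to u$ and $Q_2\colon v\to t$ covering $V(G_1)\setminus F_1$, and splice them with a Hamiltonian $u'$-$v'$ path of $G_2-F_2$ (inductive hypothesis) through the two crossings. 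In the \emph{same-$G_2$ case} ($s,t\in V(G_2)$), I apply the inductive hypothesis inside $G_2$ to obtain a (near-)Hamiltonian $s$-$t$ path $P'$ in $G_2-F_2$ and then perform a local surgery: pick an edge $(a,b)\in E(P')$ whose preimages $\phi^{-1}(a),\phi^{-1}(b)$ lie in $V(G_1)\setminus F_1$ and whose crossings are intact, delete $(a,b)$, and reconnect via $(a,\phi^{-1}(a))$, a Hamiltonian $\phi^{-1}(a)$-$\phi^{-1}(b)$ path in $G_1-F_1$ (Lemma~\ref{lem:hamilton}), and $(\phi^{-1}(b),b)$. In each case, a simple counting argument using $2^{n-1}\gg|F|$ for $n\ge 8$ guarantees that suitable bridges or splice edges exist.

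The main obstacle is the regime $|F_2|\in\{2n-11,\,2n-10\}$, where the inductive hypothesis is unavailable on $G_2-F_2$, and which can arise in all three of the cases above. In this regime $|F_1|+|F_c|\le 1$, so $G_1-F_1$ is essentially the full $(n-1)$-D THLN and almost every crossing edge survives, and the proof must exploit this scarcity of faults outside $G_2$ to compensate. The fallback plan when $|F_2|\le 2n-11$ is to apply Lemma~\ref{lem:YANG} to $G_2$ for a (near-)Hamiltonian cycle $C$ of $G_2-F_2$, and then use $C$ together with the abundant intact crossings to $G_1$ to reroute into an $s$-$t$ path: for example, in the split case $u'$ is chosen as a neighbor of $t$ on $C$ so that $C$ minus the edge $(u',t)$ becomes the required Hamiltonian $u'$-$t$ path, while in the same-$G_2$ case one must bridge portions of $C$ through $G_1-F_1$ in a way that yields a single $s$-$t$ path, possibly absorbing a vertex into the ``near-Hamiltonian'' slack. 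The extremal subcase $|F_2|=2n-10$ exceeds even Lemma~\ref{lem:YANG}, and here one needs a dedicated ad-hoc construction leveraging $|F_1|=|F_c|=0$. A secondary technicality throughout is verifying that the simple neighbor condition on $s$ and $t$ carries over to each recursive subproblem, and that the minimum-degree hypothesis of Lemma~\ref{lem:YANG} holds on $G_2-F_2$; both follow from the counting $|V(G_i)|=2^{n-1}\gg|F|$ for $n\ge 8$.
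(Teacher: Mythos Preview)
Your inductive scheme, the split $F=F_1\cup F_2\cup F_c$, and the case division are exactly the paper's (up to swapping the labels of $G_1,G_2$), and you correctly identify that the extremal regimes $|F_2|\in\{2n-11,2n-10\}$ on the heavy side are the crux.  Two points in your write-up are genuine gaps rather than routine details.

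First, the assertion that the neighbor condition ``carries over to each recursive subproblem'' by the counting $2^{n-1}\gg|F|$ is false.  When $s,t$ both lie in the heavy side $G_2$, the vertex $t$ has only $n-1$ neighbors in $G_2$, and for $n\ge 9$ one has $n-1\le 2n-10$, so it is entirely possible that $N_{G_2-F_2}(t)\setminus\{s\}=\emptyset$ while the global hypothesis $N_{G-F}(t)\setminus\{s\}\neq\emptyset$ survives only because the crossing edge $(t,\phi^{-1}(t))$ is intact.  In that situation you cannot invoke the inductive hypothesis on $G_2-F_2$ for the pair $(s,t)$ at all; the paper treats this as a separate subcase, taking a (near-)Hamiltonian $s$--$u$ path in $G_2-F_2$ for some other endpoint $u$ (which \emph{does} satisfy the condition), then crossing to $G_1$ and returning to $t$ through its surviving crossing edge.

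Second, your placeholder ``dedicated ad-hoc construction'' for $|F_2|=2n-10$ is where the real idea lives.  The paper's device is to pick any $f_e\in F_2$, temporarily declare it fault-free, and apply Lemma~\ref{lem:YANG} to $G_2-(F_2\setminus\{f_e\})$ (now $|F_2\setminus\{f_e\}|=2(n-1)-9$) to obtain a (near-)Hamiltonian cycle $C$; writing $C=\langle u,f_e,v,P,u\rangle$ then yields a (near-)Hamiltonian \emph{path} $P$ in the true $G_2-F_2$.  All the remaining work is a careful case analysis on the positions of $s,t$ relative to the endpoints $u,v$ of $P$ (and to the possibly-missed vertex $q$ when $\delta(G_2-F_2)\le 1$), using Lemma~\ref{lem:disjoint} on the fault-free $G_1$ to stitch the pieces; your sketch does not yet contain this reduction, and without it the extremal subcase is not handled.
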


\begin{proof}
We argue the assertion by induction on $n$. Let $G$ be a 7-D THLN
and $F \subseteq V(G) \bigcup E(G)$, where $|F| \leq 2\times 7 - 10
= 4$. By Lemma \ref{lem:hamilton}, $G - F$ is hamiltonian connected.
Hence, the assertion is true for $n = 7$.

Suppose the assertion holds for $n=k \ge 7$. Let $G = \oplus _{\phi}
(G_1, G_2)$ be a $(k + 1)$-D THLN, where $G_1$ and $G_2$ are $k$-D
THLNs, $E_c  = \big\{ \big( u,\phi (u)\big) :u \in V(G_1 ) \big\}$.
In the following discussion, we use a lowercase with subscript 1 to
denote a node in $V(G_1)$, and the same lowercase with subscript 2
to denote the node in $V(G_2)$ such that these two nodes are
connected by an edge in $E_c$. For example, $x_1\in V(G_1)$, $x_2\in
V(G_2)$, and $(x_1,x_2)\in E_c$.

Let $F \subseteq V(G) \bigcup E(G)$, where $\left| F \right| \leq
2(k + 1) - 10 = 2k - 8$, and let $F_1  = F \bigcap \big( {V(G_1)
\bigcup E(G_1)} \big)$, $F_2 = F \bigcap \big( {V(G_2) \bigcup
E(G_2)} \big)$, and $F_c = E_c - E(G-F) $. Without loss of
generality (W.L.O.G., for short), we may assume  $\vert F_{1}\vert
\ge \vert F_{2}\vert $, then $\vert F_{2}\vert \le k - 4$. The
discussion will proceed by distinguishing the following five cases.

\bigskip \noindent \emph{Case 1.} $\vert F_{1}\vert  \le  2k - 10$.

\bigskip \noindent \emph{Case 1.1.} $s, t \in V(G_1) $.

Since $\vert F_{1}\vert  \le  2k - 10$, $N_{G_1-F_1}(s)-\{t\}=
\emptyset$ and $N_{G_1-F_1}(t)-\{s\}= \emptyset$ can not happen
simultaneously.

\bigskip \noindent \emph{Case 1.1.1.} $N_{G_1-F_1}(s)-\{t\}\neq \emptyset$ and
$N_{G_1-F_1}(t)-\{s\}\neq \emptyset$ (see Fig. \ref{fig:1.1.1}).

According to induction hypothesis, there exists a hamiltonian or
near-hamiltonian path $P_1$ between $s$ and $t$ in $G_1 - F_1$. We
claim that we can find an edge $(u_1, v_1)$ on $P_1$ such that $u_2
\in N_{G-F}(u_1)$ and $v_2 \in N_{G-F}(v_1)$. The existence of such
an edge is due to the fact that there are at least $2^k - (2k - 10)
- 2 = 2^k - 2k + 8$ candidate edges on $P_1$, and there are at most
$2k - 8 < \big\lceil(2^k - 2k + 8)/2\big\rceil$ faulty elements in
$G_2$ and $E_c$, each of which can "block" at most two candidates.

We may write $P_1$ as $\langle s,P_{11},u_1,v_1,P_{12},t\rangle$. By
Lemma \ref{lem:hamilton}, there exists a hamiltonian path $P_2$
between $u_2$ and $v_2$ in $G_2 - F_2$. Thus, $\langle
s,P_{11},u_1,u_2,P_2,v_2,$ $v_1,P_{12},t\rangle$ forms a hamiltonian
or near-hamiltonian path between $s$ and $t$ in $G - F$.

\begin{figure}
    \centering
    \includegraphics[scale=0.7]{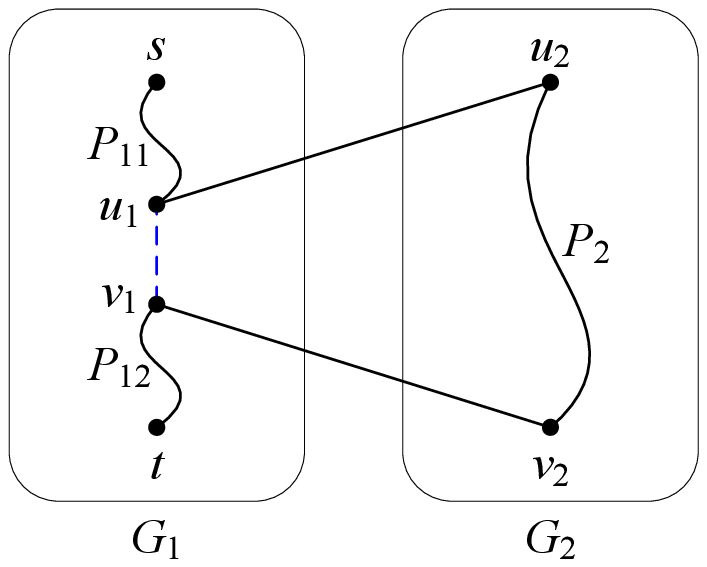}
    \caption{Illustration of the hamiltonian or near-hamiltonian path in Case 1.1.1}
    \label{fig:1.1.1}
\end{figure}

\bigskip \noindent \emph{Case 1.1.2.} Either $N_{G_1-F_1}(s)-\{t\}= \emptyset$
or $N_{G_1-F_1}(t)-\{s\}= \emptyset$ (see Fig. \ref{fig:1.1.2}).

W.L.O.G., we may assume $N_{G_1-F_1}(t)-\{s\}= \emptyset$. Let
$t_1=t$. According to the assumption $N_{G-F}(t)-\{s\}\neq
\emptyset$, we have $(t_1,t_2)\in E_c - F_c$. Since $|E_c| - |F_c|
\geq 2^k - (2k-8) \geq 122$ for $k\geq 7$, we can find an edge
$(u_1, u_2)\in E_c - F_c$ such that $u_1 \neq s$ and $u_1 \neq t$.
Clearly, $N_{G_1-F_1}(s)-\{u_1\}\neq \emptyset$ and
$N_{G_1-F_1}(u_1)-\{s\}\neq \emptyset$.

According to induction hypothesis, there exists a near-hamiltonian
path $P_1$ between $s$ and $u_1$ in $G_1 - F_1$, where $t$ is not on
$P_1$. By Lemma \ref{lem:hamilton}, there exists a hamiltonian path
$P_2$ between $u_2$ and $t_2$ in $G_2 - F_2$. Thus, $\langle
s,P_1,u_1,u_2,P_2,t_2,t\rangle$ forms a hamiltonian path between $s$
and $t$ in $G - F$.

\begin{figure}
    \centering
    \includegraphics[scale=0.7]{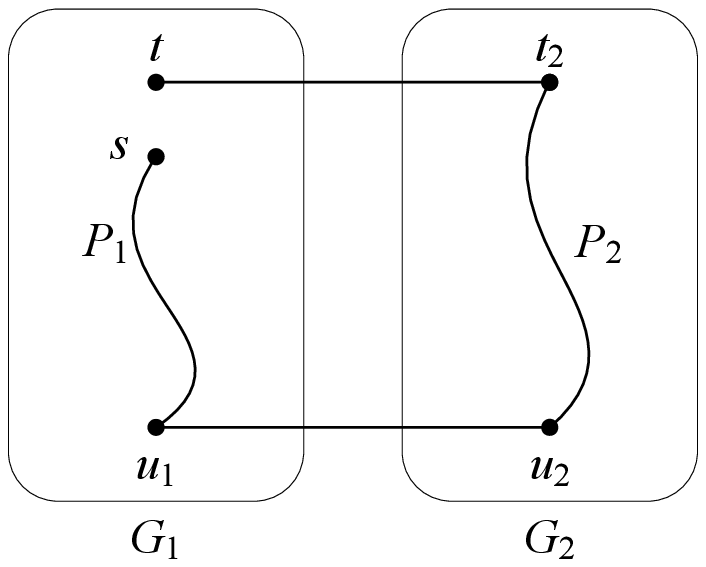}
    \caption{Illustration of the hamiltonian path in Case 1.1.2}
    \label{fig:1.1.2}
\end{figure}

\bigskip \noindent \emph{Case 1.2.} $s, t \in V(G_2) $ (see Fig. \ref{fig:1.2}).

By Lemma \ref{lem:hamilton}, there exists a hamiltonian path $P_2$
between $s$ and $t$ in $G_2 - F_2$. We can find two nonadjacent
edges $(u_2, v_2)$ and $(x_2, y_2)$ on $P_2$ such that $u_1 \in
N_{G-F}(u_2)$, $v_1 \in N_{G-F}(v_2)$, $x_1 \in N_{G-F}(x_2)$, and
$y_1 \in N_{G-F}(y_2)$. Since $|F_1|\leq 2k-10$, there exists at
most one node whose degree is less than 2 in $G_1 - F_1$. Then we
can choose one edge out of $(u_2, v_2)$ and $(x_2, y_2)$, say $(u_2,
v_2)$, such that $\deg _{G_1-F_1}(u_1)\geq 2$ and $\deg
_{G_1-F_1}(v_1)\geq 2$, which mean that
$N_{G_1-F_1}(u_1)-\{v_1\}\neq \emptyset$ and
$N_{G_1-F_1}(v_1)-\{u_1\}\neq \emptyset$.

We may write $P_2$ as $\langle s,P_{21},u_2,v_1,P_{22},t\rangle$.
According to induction hypothesis, there exists a hamiltonian or
near-hamiltonian path $P_1$ between $u_1$ and $v_1$ in $G_1 - F_1$.
Thus, $\langle s,P_{21},u_2,u_1,P_1,v_1,$ $v_2,P_{22},t\rangle$
forms a hamiltonian or near-hamiltonian path between $s$ and $t$ in
$G - F$.

\begin{figure}
    \centering
    \includegraphics[scale=0.7]{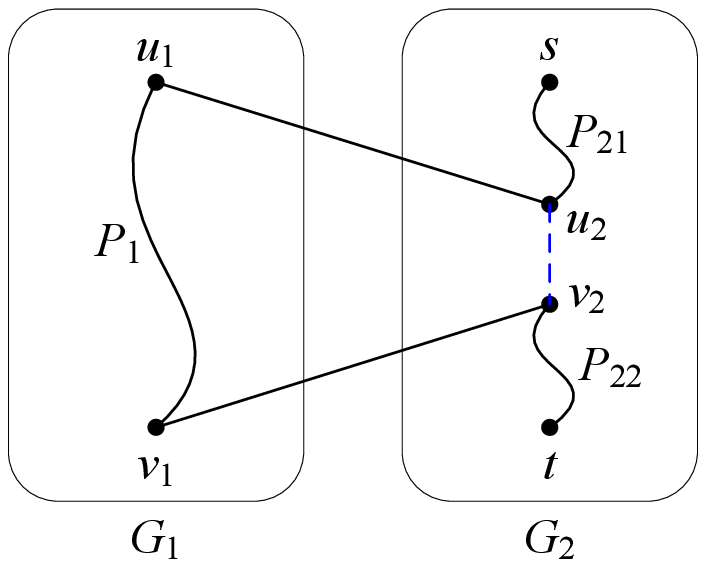}
    \caption{Illustration of the hamiltonian or near-hamiltonian path in Case 1.2}
    \label{fig:1.2}
\end{figure}

\bigskip \noindent \emph{Case 1.3.} $s\in V(G_1)$ and $t \in V(G_2)$,
or, $s\in V(G_2)$ and $t \in V(G_1)$.

W.L.O.G., we may assume $s\in V(G_1)$ and $t \in V(G_2)$ (see Fig.
\ref{fig:1.3}). First, since $|E_c| - |F_c| \geq 2^k - (2k-8) \geq
122$ for $k\geq 7$, we can find three edges $(u_1, u_2)$, $(v_1,
v_2)$ and $(w_1, w_2)$ in $E_c - F_c$ such that $u_1,v_1,w_1 \neq s$
and $u_2,v_2,w_2 \neq t$. Second, we can choose two nodes out of
$u_1$, $v_1$ and $w_1$, say $u_1$ and $v_1$, such that $\deg
_{G_1-F_1}(u_1)\geq 2$ and $\deg _{G_1-F_1}(v_1)\geq 2$. Finally,
since $N_{G-F}(s)-\{t\}\neq \emptyset$ means that
$N_{G_1-F_1}(s)\neq \emptyset$, we can choose one node out of $u_1$
and $v_1$, say $u_1$, such that $N_{G_1-F_1}(s)-\{u_1\}\neq
\emptyset$.

According to induction hypothesis, there exists a hamiltonian or
near-hamiltonian path $P_1$ between $s$ and $u_1$ in $G_1 - F_1$. By
Lemma \ref{lem:hamilton}, there exists a hamiltonian path $P_2$
between $u_2$ and $t$ in $G_2 - F_2$. Thus, $\langle
s,P_1,u_1,u_2,P_2,t\rangle$ forms a hamiltonian or near-hamiltonian
path between $s$ and $t$ in $G - F$.

\begin{figure}
    \centering
    \includegraphics[scale=0.7]{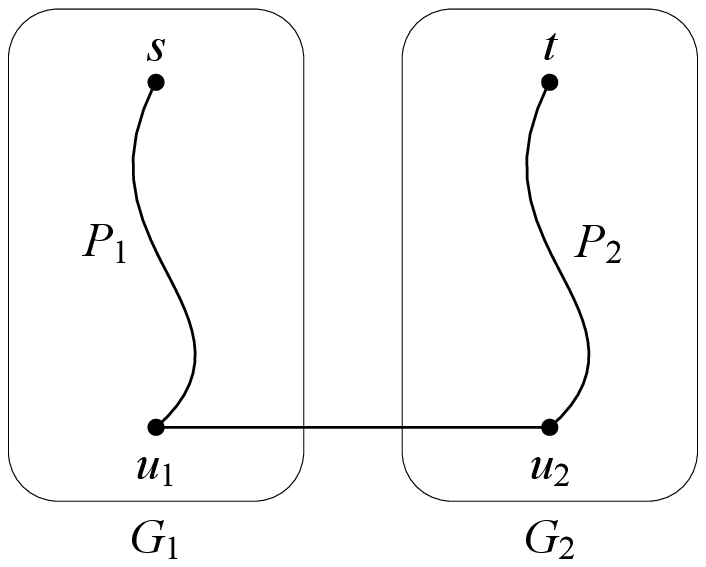}
    \caption{Illustration of the hamiltonian or near-hamiltonian path in Case 1.3}
    \label{fig:1.3}
\end{figure}

\bigskip \noindent \emph{Case 2.} $\vert F_{1}\vert  =  2k - 9$ and $\delta (G_1 - F_1) \geq 2$.

Clearly, there is at most one faulty element in $G_2$ and $E_c$. By
Lemma \ref{lem:YANG}, there exists a hamiltonian cycle $C_1$ in $G_1
- F_1$.

\bigskip \noindent \emph{Case 2.1.} $s, t \in V(G_1)$.

Clearly, $s$ and $t$ are both on $C_1$.

\bigskip \noindent \emph{Case 2.1.1.} $dist_{C_1}(s, t)= 1$ (see Fig. \ref{fig:2.1.1}).

We can find an edge $(u_1, v_1)$ on $C_1$ such that  $u_2 \in
N_{G-F}(u_1)$, $v_2 \in N_{G-F}(v_1)$, and $(u_1, v_1) \neq (s,t)$.
We may write $C_1$ as $\langle s, P_{11},u_1,v_1,P_{12},t,s\rangle$.
By Lemma \ref{lem:hamilton}, there exists a hamiltonian path $P_2$
between $u_2$ and $v_2$ in $G_2 - F_2$. Thus, $\langle
s,P_{11},u_1,u_2,P_2,v_2,v_1,P_{12},t\rangle$ forms a hamiltonian
path between $s$ and $t$ in $G - F$.

\begin{figure}
    \centering
    \includegraphics[scale=0.7]{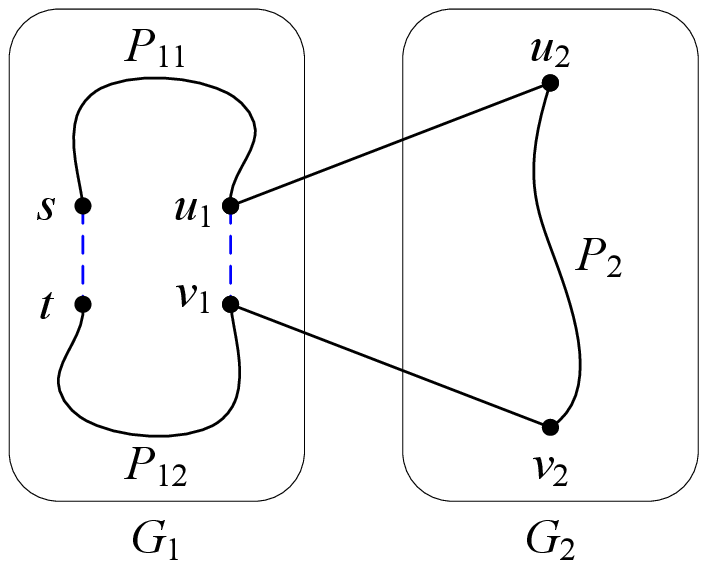}
    \caption{Illustration of the hamiltonian path in Case 2.1.1}
    \label{fig:2.1.1}
\end{figure}

\bigskip \noindent \emph{Case 2.1.2.} $dist_{C_1}(s, t)= 2$.

We may write $C_1$ as $\langle s, x_1,t,P_1,s\rangle$, then $P_1$ is
a near-hamiltonian path between $s$ and $t$ in $G_1-F_1$.

\bigskip \noindent \emph{Case 2.1.2.1.} $x_2 \in N_{G-F}(x_1)$ (see Fig.
\ref{fig:2.1.2.1}).

We may write $P_1$ as $\langle s,z_1,P_{11},y_1,t\rangle$. Because
there is at most one faulty element in $G_2$ and $E_c$, we can
choose one node out of $y_1$ and $z_1$, say $y_1$, such that $y_2
\in N_{G-F}(y_1)$. By Lemma \ref{lem:hamilton}, there exists a
hamiltonian path $P_2$ between $x_2$ and $y_2$ in $G_2 - F_2$. Thus,
$\langle s,z_1,P_{11},y_1,y_2,P_2,x_2,x_1,t\rangle$ forms a
hamiltonian path between $s$ and $t$ in $G - F$.

\begin{figure}
    \centering
    \includegraphics[scale=0.7]{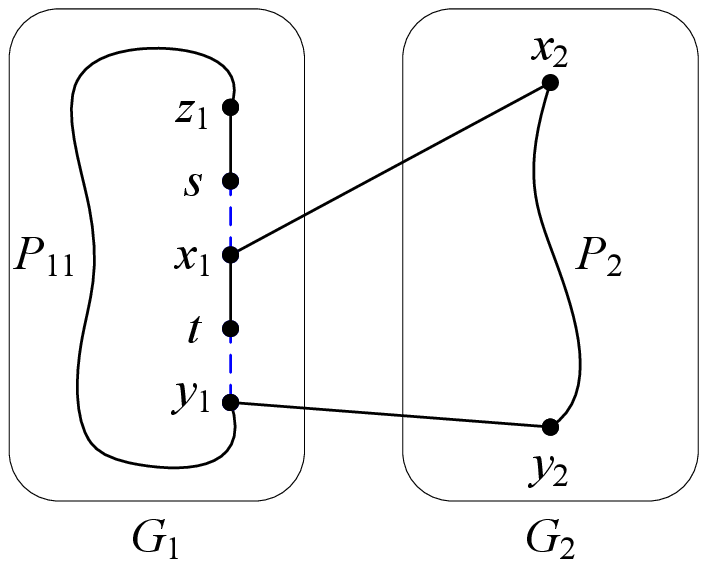}
    \caption{Illustration of the hamiltonian path in Case 2.1.2.1}
    \label{fig:2.1.2.1}
\end{figure}

\bigskip \noindent \emph{Case 2.1.2.2.} $x_2 \notin N_{G-F}(x_1)$ (see Fig.
\ref{fig:2.1.2.2}).

Since there is only one faulty element in $G_2$ and $E_c$ which
excludes $x_2$ from $N_{G-F}(x_1)$, then for an arbitrary chosen
edge $(u_1, v_1)$ on $P_1$, we have $u_2 \in N_{G-F}(u_1)$ and $v_2
\in N_{G-F}(v_1)$.

We may write $P_1$ as $\langle s,P_{12},u_1,v_1,P_{11},t\rangle$. By
Lemma \ref{lem:hamilton}, there exists a hamiltonian path $P_2$
between $u_2$ and $v_2$ in $G_2 - F_2$. Thus, $\langle
s,P_{12},u_1,u_2,$ $P_2,v_2,v_1,P_{11},t\rangle$ forms a
near-hamiltonian path between $s$ and $t$ in $G - F$.

\begin{figure}
    \centering
    \includegraphics[scale=0.7]{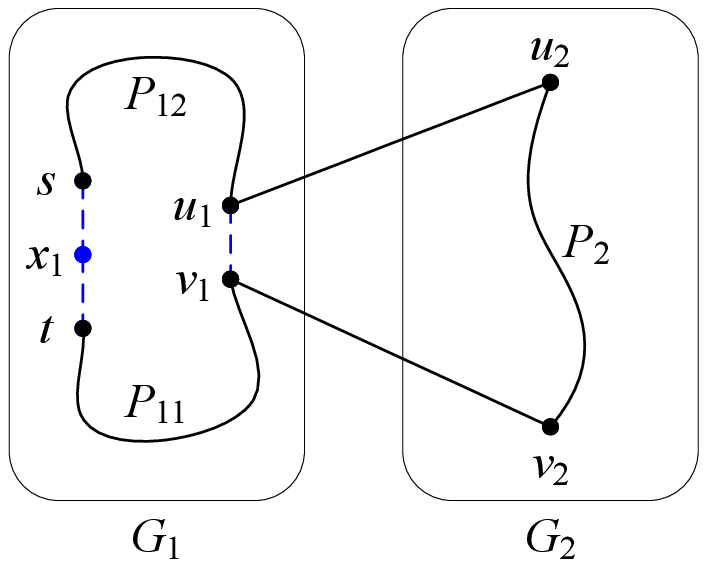}
    \caption{Illustration of the near-hamiltonian path in Case 2.1.2.2}
    \label{fig:2.1.2.2}
\end{figure}

\bigskip \noindent \emph{Case 2.1.3.} $dist_{C_1}(s, t)\geq 3$ (see Fig.
\ref{fig:2.1.3}).

We may write $C_1$ as $\langle x_1,s,u_1,P_{11},y_1,t,v_1,
P_{12},x_1\rangle$. Since there is at most one faulty element in
$G_2$ and $E_c$, then we can choose one out of two pairs of nodes
$[x_1,y_1]$ and $[u_1,v_1]$, say $[u_1,v_1]$, such that $u_2 \in
N_{G-F}(u_1)$ and $v_2 \in N_{G-F}(v_1)$.

By Lemma \ref{lem:hamilton}, there exists a hamiltonian path $P_2$
between $u_2$ and $v_2$ in $G_2 - F_2$. Thus, $\langle
s,x_1,P_{12},v_1,v_2,P_2,u_2,u_1,P_{11},y_1,t\rangle$ forms a
hamiltonian path between $s$ and $t$ in $G - F$.

\begin{figure}
    \centering
    \includegraphics[scale=0.7]{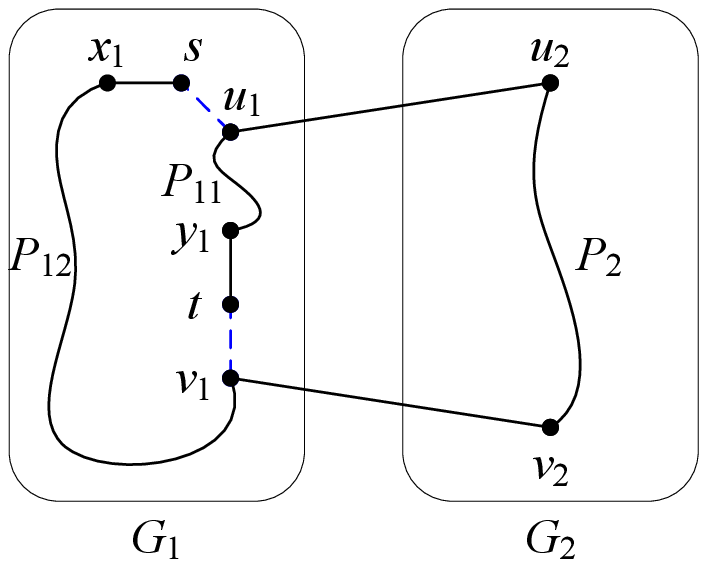}
    \caption{Illustration of the hamiltonian path in Case 2.1.3}
    \label{fig:2.1.3}
\end{figure}

\bigskip \noindent \emph{Case 2.2.} $s, t \in V(G_2) $ (see Fig. \ref{fig:2.2}).

We can find an edge $(u_1, v_1)$ on $C_1$ such that $u_2 \in
N_{G-F}(u_1)$, $v_2 \in N_{G-F}(v_1)$ and $\{ u_2, v_2\} \bigcap
\{s, t\} = \emptyset$. We may write $C_1$ as $\langle
u_1,P_1,v_1,u_1\rangle$. By Lemma \ref{lem:disjoint}, there exist
two paths $P_{21}$ and $P_{22}$ in $G_2 - F_2$ such that $P_{21}$
connects $s$ and $u_2$, $P_{22}$ connects $v_2$ and $t$, $V(P_{21})
\bigcap V(P_{22}) = \emptyset$, and $V(P_{21}) \bigcup V(P_{22} ) =
V(G_2 - F_2) $. Thus, $\langle
s,P_{21},u_2,u_1,P_1,v_1,v_2,P_{22},t\rangle$ forms a hamiltonian
path between $s$ and $t$ in $G - F$.

\begin{figure}
    \centering
    \includegraphics[scale=0.7]{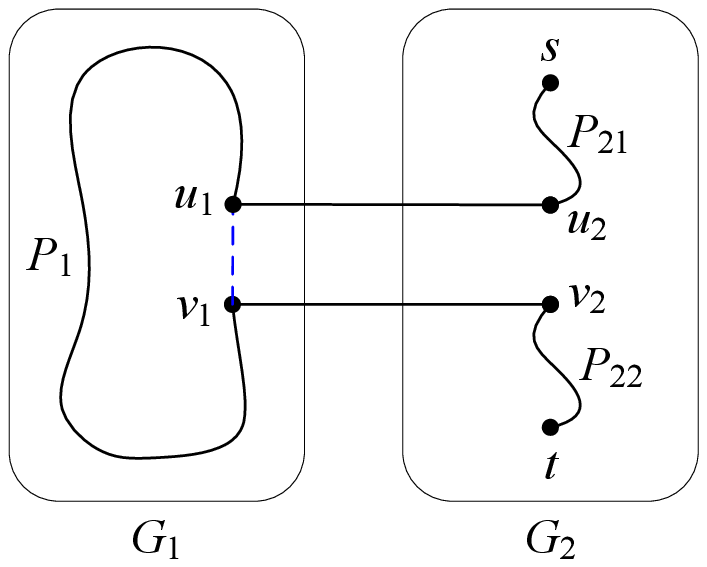}
    \caption{Illustration of the hamiltonian path in Case 2.2}
    \label{fig:2.2}
\end{figure}

\bigskip \noindent \emph{Case 2.3.} $s\in V(G_1)$ and $t \in V(G_2)$, or, $s\in V(G_2)$ and $t \in V(G_1)$.

W.L.O.G., we may assume $s\in V(G_1)$ and $t \in V(G_2)$. Clearly,
$s$ is on $C_1$. We may write $C_1$ as $\langle
u_1,s,v_1,P_1,u_1\rangle$.

\bigskip \noindent \emph{Case 2.3.1.} $u_2\in N_{G-F}(u_1)$ and $u_2 \neq t$, or, $v_2\in N_{G-F}(v_1)$ and $v_2 \neq t$.

W.L.O.G., we may assume $v_2\in N_{G-F}(u_1)$ and $v_2 \neq t$ (see
Fig. \ref{fig:2.3.1}). By Lemma \ref{lem:hamilton}, there exists a
hamiltonian path $P_2$ between $v_2$ and $t$ in $G_2 - F_2$. Thus,
$\langle s, u_1,P_1,v_1,v_2,P_2,t\rangle$ forms a hamiltonian path
between $s$ and $t$ in $G - F$.

\begin{figure}
    \centering
    \includegraphics[scale=0.7]{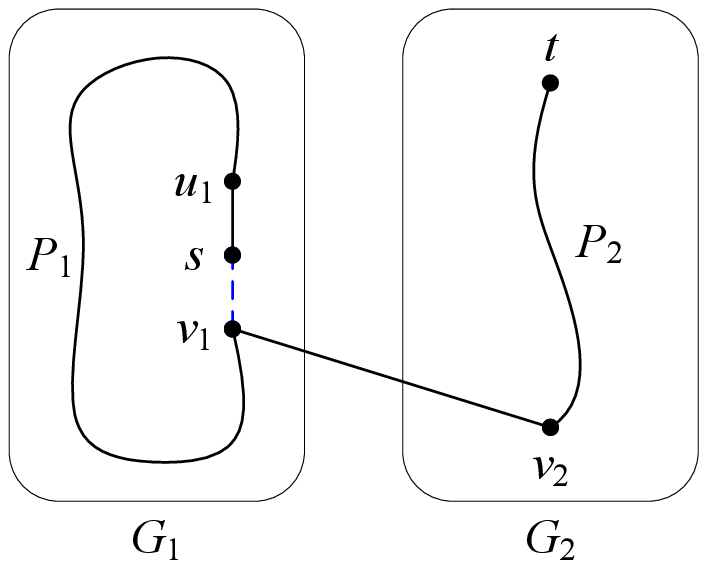}
    \caption{Illustration of the hamiltonian path in Case 2.3.1}
    \label{fig:2.3.1}
\end{figure}

\bigskip \noindent \emph{Case 2.3.2.} $u_2\notin N_{G-F}(u_1)$ and
$v_2 = t$, or, $v_2\notin N_{G-F}(v_1)$ and $u_2 = t$.

W.L.O.G., we may assume $u_2\notin N_{G-F}(u_1)$ and $v_2 = t$ (see
Fig. \ref{fig:2.3.2}). We can find an edge $(x_1,y_1)$ on $P_1$ such
that $\{x_1,y_1\}\bigcap \{u_1,v_1\} = \emptyset$. Since there is
only one faulty element in $G_2$ and $E_c$ which excludes $u_2$ from
$N_{G-F}(u_1)$, we have $x_2 \in N_{G-F}(x_1)$ and $y_2 \in
N_{G-F}(y_1)$. We may write $P_1$ as $\langle
u_1,P_{11},y_1,x_1,P_{12},v_1\rangle$. By Lemma \ref{lem:hamilton},
there exists a hamiltonian path $P_2$ between $x_2$ and $y_2$ in
$G_2 - F_2 - \{t\}$. Thus, $\langle
s,u_1,P_{11},y_1,y_2,P_2,x_2,x_1,P_{12},v_1,t\rangle$ forms a
hamiltonian path between $s$ and $t$ in $G - F$.

\begin{figure}
    \centering
    \includegraphics[scale=0.7]{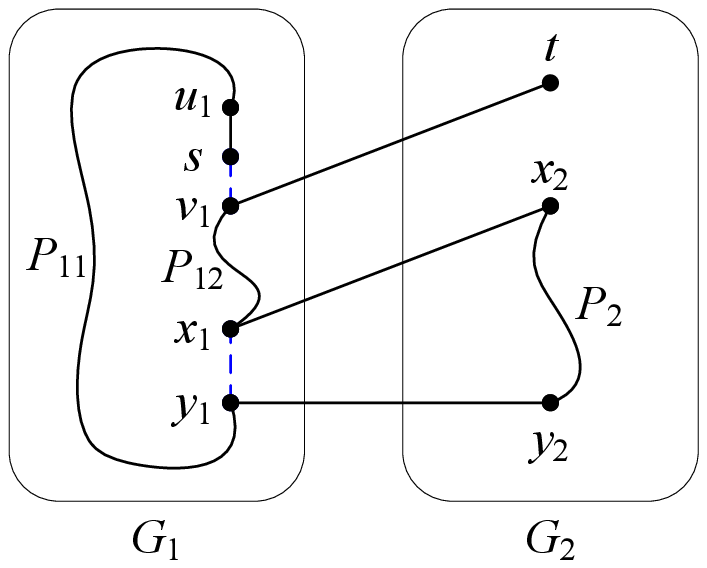}
    \caption{Illustration of the hamiltonian path in Case 2.3.2}
    \label{fig:2.3.2}
\end{figure}

\bigskip \noindent \emph{Case 3.} $\vert F_{1}\vert  =  2k - 9$ and $\delta (G_1 - F_1) \leq 1$.

Since $\delta (G_1 - F_1) \leq 1$, there exists a node, say $q_1\in
V(G_1)$, such that $\deg_{G_1-F_1}(q_1)\leq 1$. Since $\vert
F_{1}\vert  =  2k - 9$ and $\vert F\vert  \leq  2k - 8$, there is at
most one faulty element in $G_2$ and $E_c$. According to Lemma
\ref{lem:YANG}, there exists a near-hamiltonian cycle $C_1$ in $G_1
- F_1$, where $q_1$ is not on $C_1$.

\bigskip \noindent \emph{Case 3.1.} $s, t \in V(G_1)$.

\bigskip \noindent \emph{Case 3.1.1.} $s, t\neq q_1$.

Clearly, $s$ and $t$ are both on $C_1$.

\bigskip \noindent \emph{Case 3.1.1.1.} $dist_{C_1}(s,t) =1$ or $dist_{C_1}(s,t) \geq 3$.

The proof is similar to that of Cases 2.1.1 and 2.1.3.

\bigskip \noindent \emph{Case 3.1.1.2.} $dist_{C_1}(s, t)= 2$ (see Fig. \ref{fig:3.1.1.2}).

We may write $C_1$ as $\langle s, x_1,t,P_1,s\rangle$. If $x_2 \in
N_{G-F}(x_1)$, the proof is similar to that of Case 2.1.2.1. Here we
assume $x_2 \notin N_{G-F}(x_1)$.

Since $|F_1|=2k-9$ and $\deg_{G_1-F_1}(q_1)\leq 1$, then
$\deg_{G_1-F_1}(x_1)\geq k-\big((2k-9)-(k-1)\big)=8$. Thus, we can
find a node $y_1\in V(G_1-F_1)$ such that $(x_1,y_1)\in E(G_1-F_1)$
and $y_1$ is on $C_1$. We may rewrite $C_1$ as $\langle
s,x_1,t,P_{11},u_1,y_1,P_{12},v_1,s\rangle$. Since there is only one
faulty element in $G_2$ and $E_c$ which excludes $x_2$ from
$N_{G-F}(x_1)$, we have $u_2 \in N_{G-F}(u_1)$ and $v_2 \in
N_{G-F}(v_1)$.

By Lemma \ref{lem:hamilton}, there exists a hamiltonian path $P_2$
between $u_2$ and $v_2$ in $G_2 - F_2$. Thus, $\langle
s,x_1,y_1,P_{12},v_1,v_2,P_2,u_2,u_1,P_{11},t\rangle$ forms a
near-hamiltonian path between $s$ and $t$ in $G - F$.

\begin{figure}
    \centering
    \includegraphics[scale=0.7]{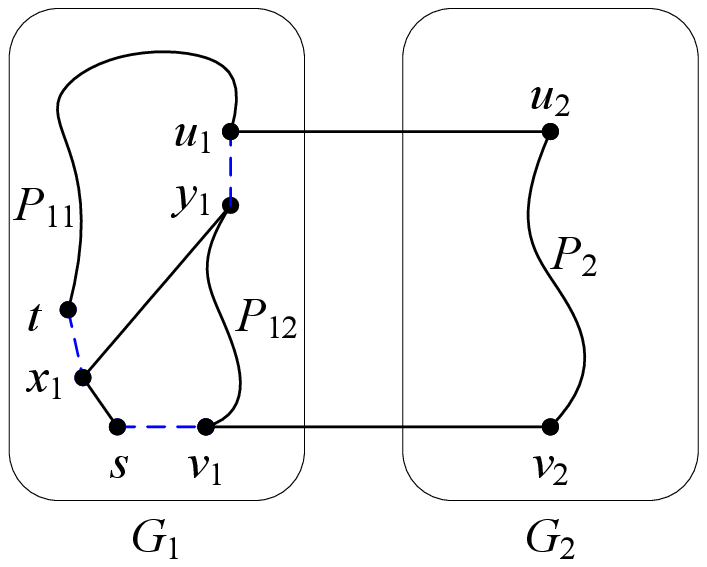}
    \caption{Illustration of the near-hamiltonian path in Case 3.1.1.2}
    \label{fig:3.1.1.2}
\end{figure}

\bigskip \noindent \emph{Case 3.1.2.} $s = q_1$ or $t = q_1$.

W.L.O.G., we may assume that $t = q_1$. If $q_2 \in N_{G-F}(q_1)$,
then we can regard $q_2$ as the agent of $t$ in $G_2$. Thus, the
proof is similar to that of Case 2.3.1. Here we assume that $q_2
\notin N_{G-F}(q_1)$. According to the assumption
$\deg_{G-F-\{s\}}(t) \geq 1$, we have $\deg_{G_1-F_1-\{s\}}(t) \geq
1$. Then we can find a node $t'\in V\big(G_1-F_1-\{s\}\big)$ such
that $(t,t')\in E\big(G_1-F_1-\{s\}\big)$. Clearly, $t'$ is on $C_1$
and $t' \neq s$. We can regard $t'$ as the agent of $t$ on $C_1$,
then the rest of the proof is similar to that of Case 3.1.1.

\bigskip \noindent \emph{Case 3.2.} $s, t \in V(G_2) $.

The proof is similar to that of Case 2.2.

\bigskip \noindent \emph{Case 3.3.} $s\in V(G_1)$ and $t \in V(G_2)$, or, $s\in V(G_2)$ and $t \in V(G_1)$.

W.L.O.G., we may assume $s\in V(G_1)$ and $t \in V(G_2)$.

\bigskip \noindent \emph{Case 3.3.1.} $s \neq q_1$.

The proof is similar to that of Case 2.3.

\bigskip \noindent \emph{Case 3.3.2.} $s = q_1$.

If $q_2\in N_{G-F}(q_1)$, then the proof is similar to that of Case
2.2. Here we assume that $q_2 \notin N_{G-F}(q_1)$. According to the
assumption $\deg_{G-F-\{t\}}(s) \geq 1$, we have $\deg_{G_1-F_1}(s)
\geq 1$. Then we can find a node $s'\in V(G_1-F_1)$ such that
$(s,s')\in E(G_1-F_1)$. Clearly, $s'$ is on $C_1$. We can regard
$s'$ as the agent of $s$ on $C_1$, then the rest of the proof is
similar to that of Case 2.3.

\bigskip \noindent \emph{Case 4.} $\vert F_{1}\vert  =  2k - 8$ and $\delta (G_1 - F_1) \geq 2$.

Clearly, there exists no faulty element in $G_2$ and $E_c$. Imagine
an arbitrarily chosen faulty element $f_e \in F_1$ to be fault-free.
By Lemma \ref{lem:YANG}, there exists a hamiltonian cycle $C_1$ in
$G_1 - F_1 + \{f_e\}$. We may write $C_1$ as $\langle u_1, f_e, v_1,
P_1, u_1\rangle$, then $P_1$ is a hamiltonian path in $G_1 - F_1$.

\bigskip \noindent \emph{Case 4.1.} $s, t \in V(G_1 )$.

Clearly, $s$ and $t$ are both on $P_1$. W.L.O.G., we may assume that
$s$ is a $u_1$-closer node on $P_1$, $t$ is a $v_1$-closer node on
$P_1$, and $d_{P_1}(s, u_1) \leq d_{P_1}(t, v_1)$.

\bigskip \noindent \emph{Case 4.1.1.} $d_{P_1}(s, t) = 1$ (see Fig. \ref{fig:4.1.1}).

We may write $P_1$ as $\langle u_1,P_{11},s,t,P_{12},v_1\rangle$. By
Lemma \ref{lem:hamilton}, there exists a hamiltonian path $P_2$
between $u_2$ and $v_2$ in $G_2$. Thus, $\langle
s,P_{11},u_1,u_2,P_2,v_2,v_1,P_{12},t\rangle$ forms a hamiltonian
path between $s$ and $t$ in $G - F$.

\begin{figure}
    \centering
    \includegraphics[scale=0.7]{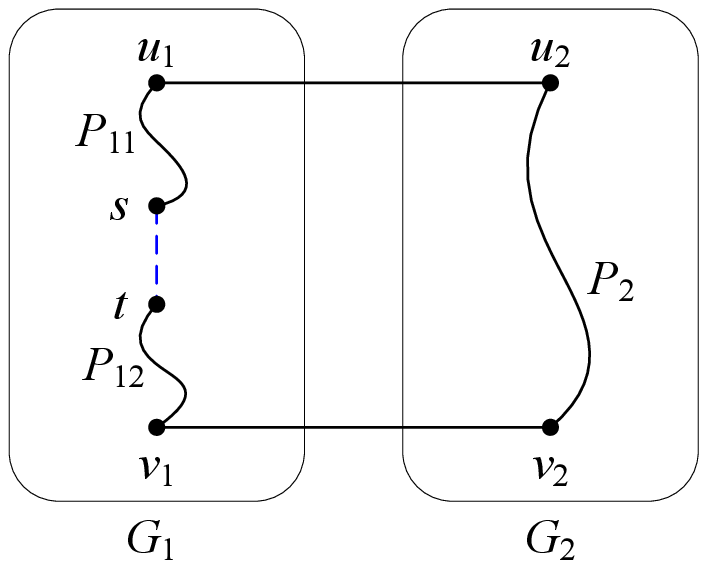}
    \caption{Illustration of the hamiltonian path in Case 4.1.1}
    \label{fig:4.1.1}
\end{figure}

\bigskip \noindent \emph{Case 4.1.2.} $d_{P_1}(s, t) = 2$ (see Fig. \ref{fig:4.1.2}).

we may write $P_1$ as $\langle
u_1,P_{11},s,x_1,t,y_1,P_{12},v_1\rangle$. By Lemma
\ref{lem:disjoint}, there exist two paths $P_{21}$ and $P_{22}$ in
$G_2$ such that $P_{21}$ connects $u_2$ and $v_2$, $P_{22}$ connects
$x_2$ and $y_2$, $V(P_{21}) \bigcap V(P_{22} ) = \emptyset$, and
$V(P_{21}) \bigcup V(P_{22}) = V(G_2)$. Thus, $\langle
s,P_{11},u_1,u_2,P_{21},v_2,v_1,P_{12},y_1,y_2,$
$P_{22},x_2,x_1,t\rangle$ forms a hamiltonian path between $s$ and
$t$ in $G - F$.

\begin{figure}
    \centering
    \includegraphics[scale=0.7]{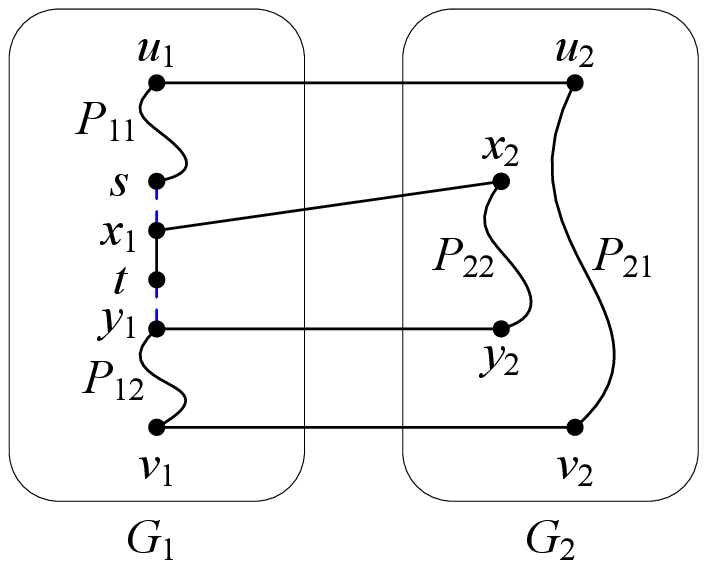}
    \caption{Illustration of the hamiltonian path in Case 4.1.2}
    \label{fig:4.1.2}
\end{figure}

\bigskip \noindent \emph{Case 4.1.3.} $d_{P_1}(s, t) \geq 3$ (see Fig. \ref{fig:4.1.3}).

we may write $P_1$ as $\langle
u_1,P_{11},s,x_1,P_{12},y_1,t,P_{13},v_1\rangle$. By Lemma
\ref{lem:disjoint}, there exist two paths $P_{21}$ and $P_{22}$ in
$G_2$ such that $P_{21}$ connects $u_2$ and $x_2$, $P_{22}$ connects
$v_2$ and $y_2$, $V (P_{21}) \bigcap V (P_{22} )=\emptyset$, and $V
(P_{21}) \bigcup V (P_{22} ) = V(G_2)$. Thus, $\langle
s,P_{11},u_1,u_2,P_{21},x_2,x_1,P_{12},$
$y_1,y_2,P_{22},v_2,v_1,P_{13},t\rangle$ forms a hamiltonian path
between $s$ and $t$ in $G - F$.

\begin{figure}
    \centering
    \includegraphics[scale=0.7]{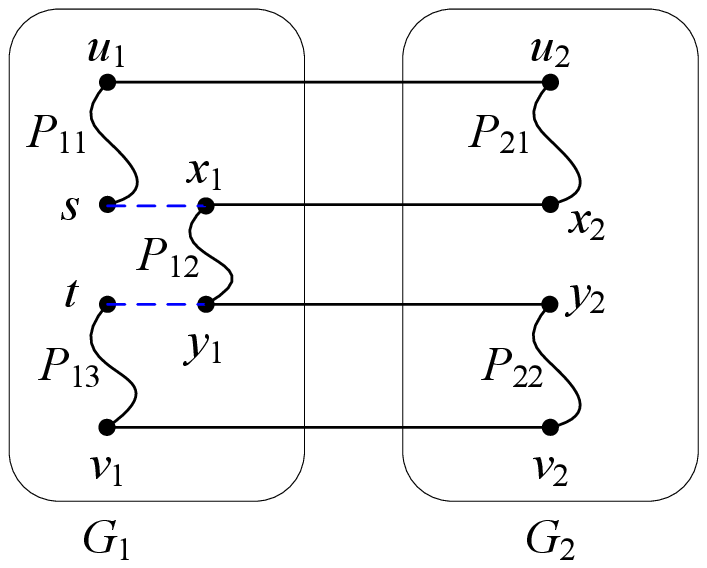}
    \caption{Illustration of the hamiltonian path in Case 4.1.3}
    \label{fig:4.1.3}
\end{figure}

\bigskip \noindent \emph{Case 4.2.} $s, t \in V(G_2 )$ (see Fig. \ref{fig:4.2.1}).

\bigskip \noindent \emph{Case 4.2.1.} $\{u_2, v_2\}\bigcap \{s, t\} = \emptyset$.

By Lemma \ref{lem:disjoint}, there exist two paths $P_{21}$ and
$P_{22}$ in $G_2$ such that $P_{21}$ connects $u_2$ and $s$,
$P_{22}$ connects $v_2$ and $t$, $V(P_{21}) \bigcap
V(P_{22})=\emptyset$, and $V(P_{21}) \bigcup V(P_{22}) = V(G_2)$.
Thus, $\langle s,P_{21},u_2,u_1,P_1,$ $v_1,v_2,P_{22},t\rangle$
forms a hamiltonian path between $s$ and $t$ in $G - F$.

\begin{figure}
    \centering
    \includegraphics[scale=0.7]{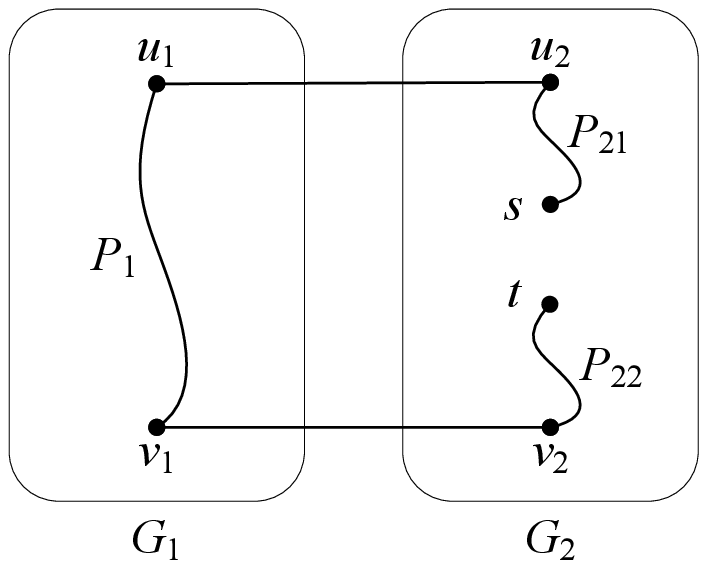}
    \caption{Illustration of the hamiltonian path in Case 4.2.1}
    \label{fig:4.2.1}
\end{figure}

\bigskip \noindent \emph{Case 4.2.2.} $\big|\{u_2, v_2\}\bigcap \{s, t\}\big| = 1$
(see Fig. \ref{fig:4.2.2}).

W.L.O.G., we may assume $u_2 = s$ and $v_2 \neq t$.  By Lemma
\ref{lem:hamilton}, there exists a hamiltonian path $P_2$ between
$v_2$ and $t$ in $G_2 - \{s\}$. Thus, $\langle
s,u_1,P_1,v_1,v_2,P_2,t\rangle$ forms a hamiltonian path between $s$
and $t$ in $G - F$.

\begin{figure}
    \centering
    \includegraphics[scale=0.7]{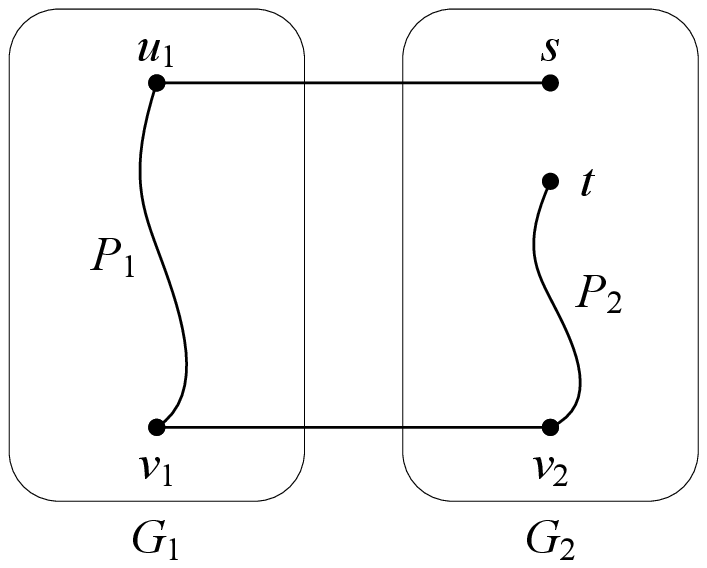}
    \caption{Illustration of the hamiltonian path in Case 4.2.2}
    \label{fig:4.2.2}
\end{figure}

\bigskip \noindent \emph{Case 4.2.3.} $\{u_2, v_2\}= \{s, t\}$.
(see Fig. \ref{fig:4.2.3})

W.L.O.G., we may assume $u_2 = s$ and $v_2 = t$. We can find an edge
$(x_1, y_1)$ on $P_1$ such that
$\{x_1,y_1\}\bigcap\{u_1,v_1\}=\emptyset$. We may write $P_1$ as
$\langle u_1,P_{11},x_1,y_1,P_{12},v_1\rangle$. By Lemma
\ref{lem:hamilton}, there exists a hamiltonian path $P_2$ between
$x_2$ and $y_2$ in $G_2 - \{s,t\}$. Thus, $\langle
s,u_1,P_{11},x_1,x_2,P_2,y_2,y_1,P_{12},v_1,t\rangle$ forms a
hamiltonian path between $s$ and $t$ in $G - F$.

\begin{figure}
    \centering
    \includegraphics[scale=0.7]{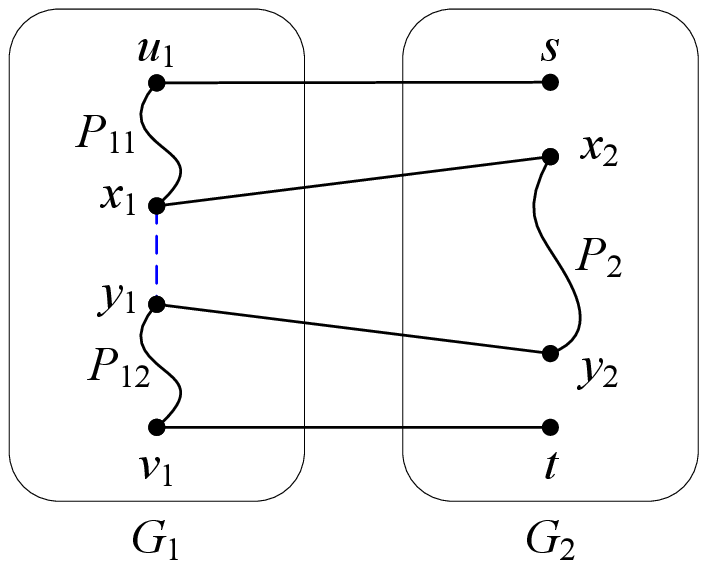}
    \caption{Illustration of the hamiltonian path in Case 4.2.3}
    \label{fig:4.2.3}
\end{figure}

\bigskip \noindent \emph{Case 4.3.} $s\in V(G_1)$ and $t \in V(G_2)$, or, $s\in V(G_2)$ and $t \in V(G_1)$.

W.L.O.G., we may assume $s\in V(G_1)$ and $t \in V(G_2)$. Clearly,
$s$ is on $P_1$. We may assume that $s$ is a $u_1$-closer node on
$P_1$. We may write $P_1$ as $\langle
u_1,P_{11},s,w_1,P_{12},v_1\rangle$.

\bigskip \noindent \emph{Case 4.3.1.} $u_2,w_2,v_2 \neq t$ (see Fig.
\ref{fig:4.3.1}).

By Lemma \ref{lem:disjoint}, there exist two paths $P_{21}$ and
$P_{22}$ in $G_2$ such that $P_{21}$ connects $u_2$ and $w_2$,
$P_{22}$ connects $v_2$ and $t$, $V(P_{21}) \bigcap
V(P_{22})=\emptyset$, and $V(P_{21}) \bigcup V(P_{22}) = V(G_2)$.
Thus, $\langle
s,P_{11},u_1,u_2,P_{21},w_2,w_1,P_{12},v_1,v_2,P_{22},t\rangle$
forms a hamiltonian path between $s$ and $t$ in $G - F$.

\begin{figure}
    \centering
    \includegraphics[scale=0.7]{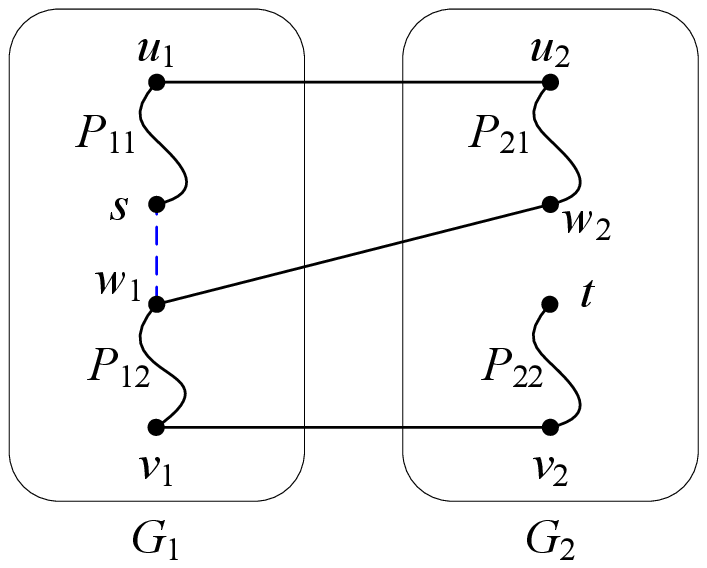}
    \caption{Illustration of the hamiltonian path in Case 4.3.1}
    \label{fig:4.3.1}
\end{figure}

\bigskip \noindent \emph{Case 4.3.2.} $v_2 = t$ or $w_2 = t$ (see Fig.
\ref{fig:4.3.2}).

W.L.O.G., we may assume $v_2 = t$. By Lemma \ref{lem:hamilton},
there exists a hamiltonian path $P_2$ between $u_2$ and $w_2$ in
$G_2 - \{t\}$. Thus, $\langle
s,P_{11},u_1,u_2,P_2,w_2,w_1,P_{12},v_1,t\rangle$ forms a
hamiltonian path between $s$ and $t$ in $G - F$.

\begin{figure}
    \centering
    \includegraphics[scale=0.7]{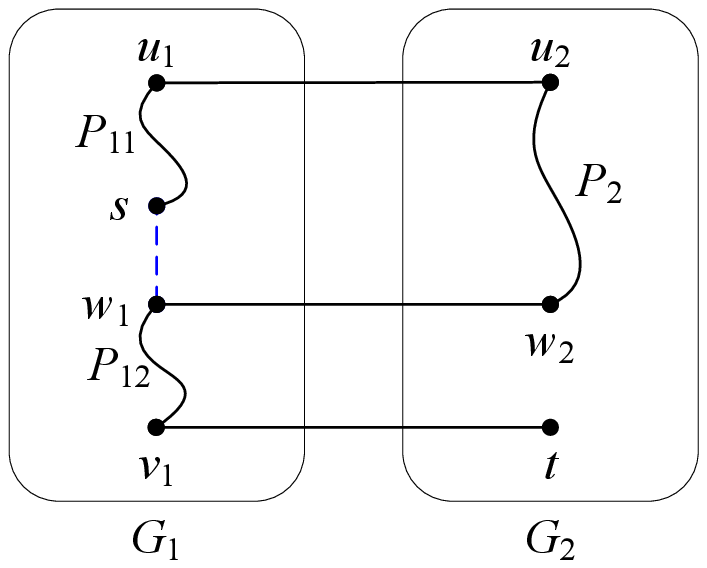}
    \caption{Illustration of the hamiltonian path in Case 4.3.2}
    \label{fig:4.3.2}
\end{figure}

\bigskip \noindent \emph{Case 4.3.3.} $u_2 = t$.

\bigskip \noindent \emph{Case 4.3.3.1.} $dist_{P_1}(u_1,s) = 1$ (see Fig.
\ref{fig:4.3.3.1}).

Since $\delta(G_1-F_1)\geq 2$, we can find a node $x_1$ on $P_{12}$
such that $(u_1, x_1) \in E(G_1-F_1)$. W.L.O.G., we may assume that
$x_1$ is a $v_1$-closer node on $P_{12}$. We may write $P_{12}$ as
$\langle w_1,P_{13},y_1,x_1,P_{14},v_1\rangle$.

By Lemma \ref{lem:disjoint}, there exist two paths $P_{21}$ and
$P_{22}$ in $G_2$ such that $P_{21}$ connects $w_2$ and $t$,
$P_{22}$ connects $v_2$ and $y_2$, $V(P_{21}) \bigcap
V(P_{22})=\emptyset$, and $V(P_{21}) \bigcup V(P_{22}) = V(G_2)$.
Thus, $\langle
s,u_1,x_1,P_{14},v_1,v_2,P_{22},y_2,y_1,P_{13},w_1,w_2,P_{21},t\rangle$
forms a hamiltonian path between $s$ and $t$ in $G - F$.

\begin{figure}
    \centering
    \includegraphics[scale=0.7]{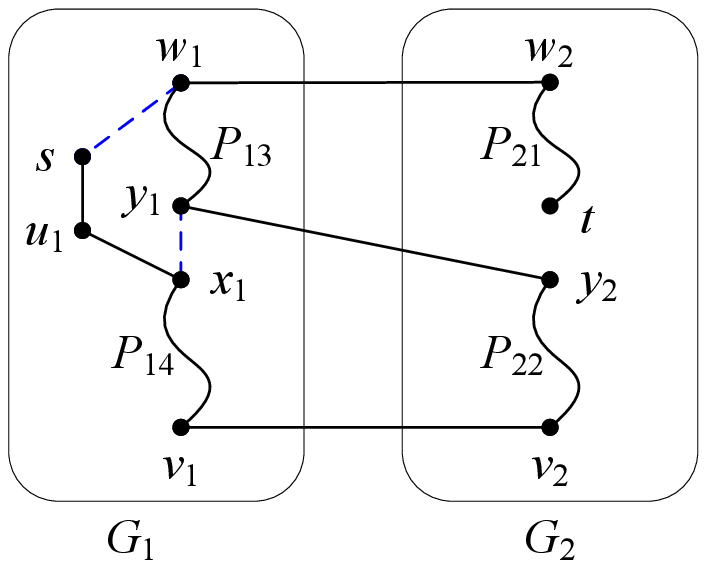}
    \caption{Illustration of the hamiltonian path in Case 4.3.3.1}
    \label{fig:4.3.3.1}
\end{figure}

\bigskip \noindent \emph{Case 4.3.3.2.} $dist_{P_1}(u_1,s) \geq 2$ (see Fig. \ref{fig:4.3.3.2}).

we may write $P_{11}$ as $\langle u_1,P_{13},x_1,s\rangle$. Let $s_1
= s$. By Lemma \ref{lem:disjoint}, there exist two paths $P_{21}$
and $P_{22}$ in $G_2$ such that $P_{21}$ connects $s_2$ and $w_2$,
$P_{22}$ connects $v_2$ and $x_2$, $V(P_{21}) \bigcap
V(P_{22})=\emptyset$, and $V(P_{21}) \bigcup V(P_{22}) =
V(G_2)-\{t\}$. Thus, $\langle s,s_2,P_{21},w_2,w_1,P_{12},v_1,v_2,$
$P_{22},x_2,x_1,P_{13},u_1,t\rangle$ forms a hamiltonian path
between $s$ and $t$ in $G - F$.

\begin{figure}
    \centering
    \includegraphics[scale=0.7]{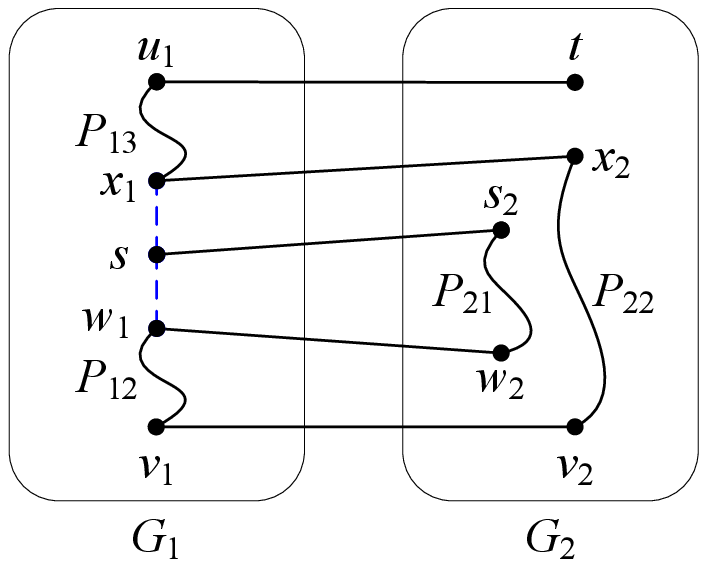}
    \caption{Illustration of the hamiltonian path in Case 4.3.3.2}
    \label{fig:4.3.3.2}
\end{figure}

\bigskip \noindent \emph{Case 5.} $\vert F_{1}\vert  =  2k - 8$ and $\delta (G_1 - F_1) \leq 1$.

Since $\delta (G_1 - F_1) \leq 1$, there exists a node, say $q_1\in
V(G_1)$, such that $\deg_{G_1-F_1}(q_1)\leq 1$. Since $\vert
F_{1}\vert  =  \vert F\vert  =  2k - 8$, there exists no faulty
element in $G_2$ and $E_c$. Imagine an arbitrarily chosen faulty
element $f_e \in F_1$ to be fault-free. Then by Lemma
\ref{lem:YANG}, there exists a near-hamiltonian cycle $C_1$ in $G_1
- F_1 + \{f_e\}$. We may write $C_1$ as $\langle u_1, f_e, v_1, P_1,
u_1\rangle$, then $P_1$ is a near-hamiltonian path in $G_1 - F_1$,
where $q_1$ is not on $P_1$.

\bigskip \noindent \emph{Case 5.1.} $s, t \in V(G_1 ) $.

\bigskip \noindent \emph{Case 5.1.1.} $s, t\neq q_1$.

The proof is similar to that of Case 4.1.

\bigskip \noindent \emph{Case 5.1.2.} $s = q_1$ or $t = q_1$.

W.L.O.G., we may assume that $t = q_1$. If $q_2 \in N_{G-F}(q_1)$,
then we can regard $q_2$ as the agent of $t$ in $G_2$. Thus, the
proof is similar to that of Case 4.3.1. Here we assume that $q_2
\notin N_{G-F}(q_1)$. According to the assumption
$\deg_{G-F-\{s\}}(t) \geq 1$, we have $\deg_{G_1-F_1-\{s\}}(t) \geq
1$. We can find a node $t'\in V\big(G_1-F_1-\{s\}\big)$ such that
$(t,t')\in E\big(G_1-F_1-\{s\}\big)$. Clearly, $t'$ is on $P_1$ and
$t' \neq s$. We can regard $t'$ as the agent of $t$ on $P_1$, then
the rest of the proof is similar to that of Case 4.1.

\bigskip \noindent \emph{Case 5.2.} $s, t \in V(G_2 )$.

The proof is similar to that of Case 4.2.

\bigskip \noindent \emph{Case 5.3.}  $s\in V(G_1)$ and $t \in V(G_2)$, or, $s\in V(G_2)$ and $t \in V(G_1)$.

W.L.O.G., we may assume $s\in V(G_1)$ and $t \in V(G_2)$.

\bigskip \noindent \emph{Case 5.3.1.} $s \neq q_1$.

W.L.O.G., we may assume that $s$ is a $u_1$-closer node on $P_1$. We
may write $P_1$ as $\langle u_1,P_{11},s,w_1,$ $P_{12},v_1\rangle$.
If $u_2 \neq t$, the proof is similar to that of Case 4.3.1 and
4.3.2. Here we assume that $u_2 = t$.

\bigskip \noindent \emph{Case 5.3.1.1.} $d_{P_1}(u_1,s) = 1$.

Since $\deg_{G_1 - F_1}(q_1) \leq 1$ and $|F_1|=2k-8$, then
$\deg_{G_1-F_1}(u_1)\geq k-\big((2k-8)-(k-1)\big)=7$. Thus, we can
find a node $x_1\in G_1 - F_1$ such that $(u_1,x_1)\in E(G_1-F_1)$
and $x_1$ is on $P_1$. The rest of the proof is similar to that of
Case 4.3.3.1.

\bigskip \noindent \emph{Case 5.3.1.2.} $d_{P_1}(u_1,s) \geq 2$.

The proof is similar to that of Case 4.3.3.2.

\bigskip \noindent \emph{Case 5.3.2.} $s = q_1$.

If $q_2\in N_{G-F}(q_1)$, then the proof is similar to that of Case
4.2. Here we assume that $q_2 \notin N_{G-F}(q_1)$. According to the
assumption $\deg_{G-F-\{t\}}(s) \geq 1$, we have $\deg_{G_1-F_1}(s)
\geq 1$. We can find a node $s'$ such that $(s,s')\in E(G_1-F_1)$
and $s'$ is on $P_1$. We can regard $s'$ as the agent of $s$ on
$P_1$, then the rest of the proof is similar to that of Case 5.3.1.
\end{proof}

\section{Conclusion}
\label{}

It is well known that many classical parallel algorithms possess a
linear array structured task graph. In order to implement a linear
array structured parallel algorithm efficiently on a specific
parallel computing system, it is essential to map the tasks owned by
the parallel algorithm to the nodes of the underlying
interconnection network so that any two tasks that are adjacent in
the linear array are mapped to two adjacent nodes of the network. If
the number of tasks in the linear array structured parallel
algorithm equals the number of nodes in the associated
interconnection network, it is desirable for this network to have a
hamiltonian path.

In this paper, we studied the fault-tolerant hamiltonian or
near-hamiltonian path in an $n$-D THLN $G$ ($n \geq 7$) with a set
$F$ of up to $2n - 10$ faulty elements. We proved that for any two
nodes $s,t\in V(G-F)$ satisfying a necessary condition on neighbors
of $u$ and $v$, $G-F$ contains a hamiltonian or near-hamiltonian
path between $s$ and $t$. Consequently, a linear array structured
parallel algorithm can be efficiently implemented on a parallel
computing system with THLN as its interconnection network even with
faulty nodes and/or links. As a nontrivial extension of
\cite{YANG11}, our result extends further the fault-tolerant graph
embedding capability of THLNs.

In our opinion, the method developed in this paper is very powerful
for exploring the near-hamiltonian cycle and near-hamiltonian path
in other interconnection networks under the large fault model. The
embedding capability of paths and cycles of various lengths, i.e.,
the fault-tolerant pancyclicity and fault-tolerant panconnectivity
of THLNs under the large fault model remain yet to be solved. It is
also worthwhile to study how to embed meshes and tori into THLNs
under the large fault model.

\section*{Acknowledgement}

This work was supported by National Nature Science Foundation of
China (No.60973120, No.60903073, No.61003231, No.61103109,
No.11105025, and No.11105024), China Postdoctor Research Foundation
(No.20110491705), and Scientific Research Foundation for the
Returned Overseas Chinese Scholars, State Education Ministry
(No.GGRYJJ08-2).

\label{}

\begin{biography}[{\includegraphics[width=1in,height=1.25in,clip,keepaspectratio]{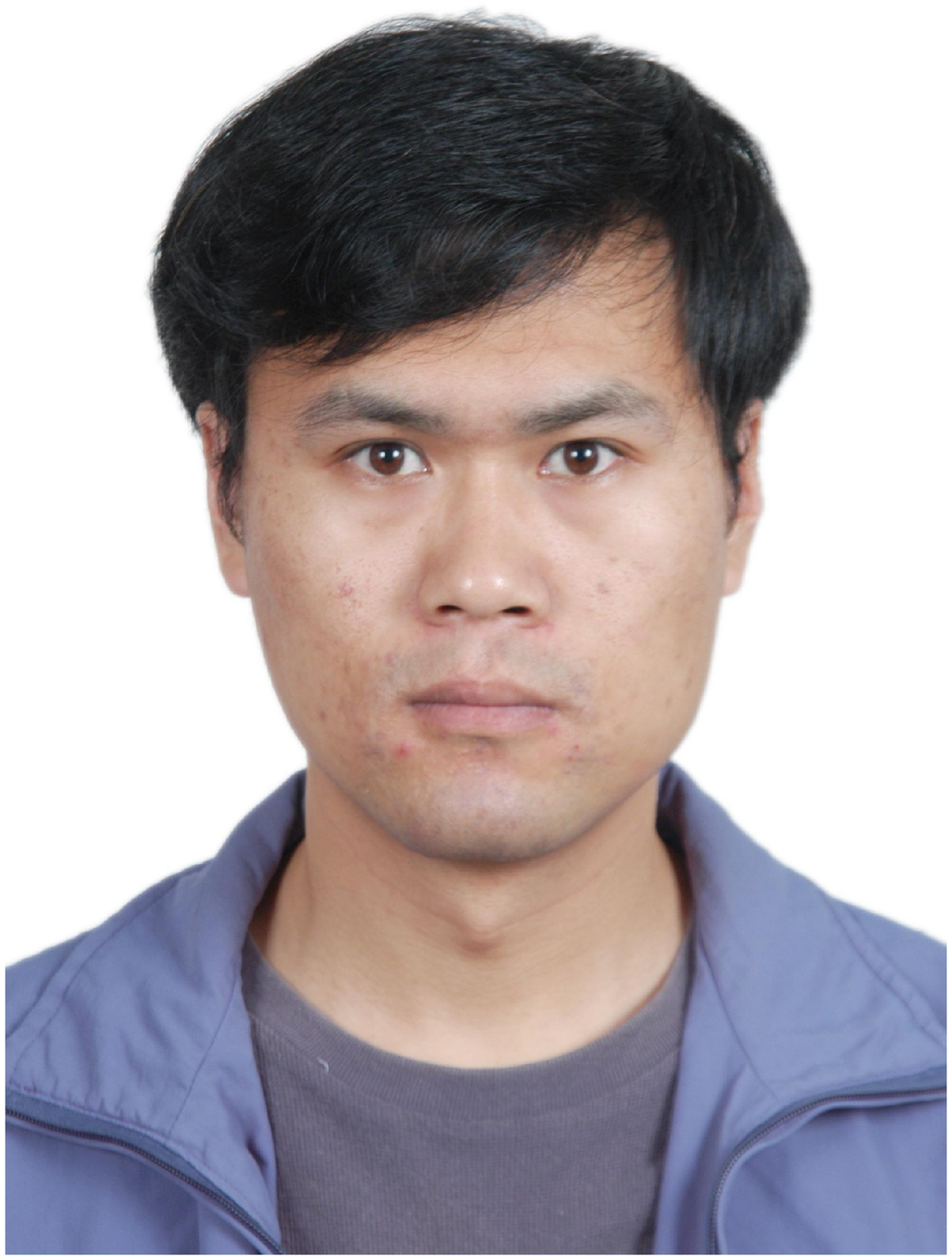}}]{Qiang Dong}
received the Ph.D. degree in Computer Science from Chongqing
University in 2010. He is now an assistant professor in the School
of Computer Science and Engineering, University of Electronic
Science and Technology of China, Chengdu, China. His current
research interests include interconnection networks,
Network-on-Chip, fault diagnosis, and complex networks.
\end{biography}
% or if you just want to reserve a space for a photo:

% if you will not have a photo

% insert where needed to balance the two columns on the last page
%\newpage

% You can push biographies down or up by placing
% a \vfill before or after them. The appropriate
% use of \vfill depends on what kind of text is
% on the last page and whether or not the columns
% are being equalized.

%\vfill

% Can be used to pull up biographies so that the bottom of the last one
% is flush with the other column.
%\enlargethispage{-5in}

% that's all folks
\end{document}